\newtheorem{theorem}{Theorem}
\theoremstyle{plain}
\newtheorem{result}{Result}
\numberwithin{equation}{section}
\begin{document}
\title[GPU Algorithms for Large QAPs]{RLT2-based Parallel Algorithms for Solving Large Quadratic Assignment Problems on Graphics Processing Unit Clusters}
\author[Date and Nagi]{Ketan Date and Rakesh Nagi}
\address{Department of Industrial and Enterprise Systems Engineering\newline \indent University of Illinois at Urbana-Champaign, \newline \indent 117 Transportation Building,  Urbana, IL 61801.}%
\email{date2@illinois.edu; nagi@illinois.edu}%

\keywords{Quadratic Assignment Problem; Linear Assignment Problem; Branch-and-bound; Parallel Algorithm; Graphics Processing Unit; CUDA; RLT2.}%

\begin{abstract}
This paper discusses efficient parallel algorithms for obtaining strong lower bounds and exact solutions for large instances of the Quadratic Assignment Problem (QAP). Our parallel architecture is comprised of both multi-core processors and Compute Unified Device Architecture (CUDA) enabled NVIDIA Graphics Processing Units (GPUs) on the Blue Waters Supercomputing Facility at the University of Illinois at Urbana-Champaign. We propose novel parallelization of the Lagrangian Dual Ascent algorithm on the GPUs, which is used for solving a QAP formulation based on Level-2 Refactorization Linearization Technique (RLT2). The Linear Assignment sub-problems (LAPs) in this procedure are solved using our accelerated Hungarian algorithm [Date, Ketan, Rakesh Nagi. 2016. GPU-accelerated Hungarian algorithms for the Linear Assignment Problem. Parallel Computing 57 52-72]. We embed this accelerated dual ascent algorithm in a parallel branch-and-bound scheme and conduct extensive computational experiments on single and multiple GPUs, using problem instances with up to 42 facilities from the QAPLIB. 
The experiments suggest that our  GPU-based  approach is scalable and it can be used to obtain  tight lower bounds on large QAP instances. Our accelerated branch-and-bound scheme is able to comfortably solve Nugent and Taillard instances (up to 30 facilities) from the QAPLIB, using modest number of GPUs. 
\end{abstract}

\maketitle


\section{Introduction}\label{sec:intro}

The Quadratic Assignment Problem (QAP) is one of the oldest mathematical problems in the literature and it has received substantial attention from the researchers around the world. QAP was originally introduced by \citet{koopmans1957} as a mathematical model to locate indivisible economical activities (such as facilities) on a set of locations and the cost of the assignment is a function of both distance and flow. The objective is to assign each facility to a location so as to minimize a quadratic cost function. The generalized mathematical formulation for the QAP, given by \citet{lawler1963}, can be written as follows: 
\begin{align}
\text{QAP: }\min \quad & \sum_{i=1}^n \sum_{p=1}^n b_{ip}x_{ip} + \sum_{i=1}^n \sum_{j=1}^n \sum_{p=1}^n \sum_{q=1}^n f_{ij}d_{pq}x_{ip}x_{jq}; & \label{eq:qap:obj}\\
\text{s.t.} & \sum_{p=1}^n x_{ip} = 1 \quad & \forall i = 1,\dots, n;  & \label{eq:qap:con1}\\
& \sum_{i=1}^n x_{ip} = 1 \quad & \forall p = 1,\dots, n;  & \label{eq:qap:con2}\\
& x_{ip} \in \left \{ 0, 1 \right \} \quad & \forall i, p = 1, \dots, n. \label{eq:qap:con3} &
\end{align}
The decision variable $x_{ip}=1$, if facility $i$ is assigned to location $p$ and 0 otherwise. Constraints \eqref{eq:qap:con1} and \eqref{eq:qap:con2} enforce that each facility should be assigned to exactly one location and each location should be assigned to exactly one facility. $b_{ip}$ is the fixed cost of assigning facility $i$ to location $p$; $f_{ij}$ is the flow between the pair of facilities $i$ and $j$; and $d_{pq}$ is the distance between the pair of locations $p$ and $q$.

Despite having the same constraint set as the Linear Assignment Problem (LAP), the QAP is a strongly NP-hard problem \citep{sahni1976}, i.e., it cannot be solved efficiently within a guaranteed time limit. Additionally, it is difficult to find a provable $\epsilon$-optimal solution to the QAP.  The quadratic nature of the objective function also adds to the solution complexity. One of the ways of solving the QAP is to convert it into a Mixed Integer Linear Program (MILP) by introducing additional variables and constraints. Different linearizations have been proposed by \citet{lawler1963}, \citet{kaufman1978}, \citet{frieze1983} and \citet{adams1994}. Table \ref{tbl:qaplin}  presents a comparison of these various linearizations in terms of number of variables and constraints. Many formulations and algorithms have been developed over the years for solving the QAP optimally or sub-optimally. For a list of references on the QAP,  readers are directed to the survey papers by \citet{burkard2002} and \citet{loiola2007}.

\begin{table}
\begin{center}
\caption{Linearization models for QAP.}
\label{tbl:qaplin}
{
\begin{tabular}{|l|c|c|c|c|}
\hline
\textbf{Linearization Model} & \textbf{Binary Variables} & \textbf{Continuous Variables} & \textbf{Constraints} \tabularnewline
\hline
\citet{lawler1963} & $O(n^4)$ & -- & $O(n^4)$\tabularnewline
\hline
\citet{kaufman1978} & $O(n^2)$ & $O(n^2)$ & $O(n^2)$\tabularnewline
\hline
\citet{frieze1983} & $O(n^2)$ & $O(n^4)$ & $O(n^4)$\tabularnewline
\hline
\citet{adams1994} RLT1 & $O(n^2)$ & $O(n^4)$ & $O(n^4)$\tabularnewline
\hline
\citet{adams2007} RLT2 & $O(n^2)$  & $O(n^6)$ & $O(n^6)$ \tabularnewline
\hline
\citet{hahn2012} RLT3 & $O(n^2)$ & $O(n^8)$ & $O(n^8)$\tabularnewline
\hline
\end{tabular}
}
\end{center}
\end{table}

The main advantage of formulating the QAP as an MILP is that we can relax the integrality restrictions on the variables and solve the resulting linear program. The objective function value obtained from this LP solution can be used as a lower bound in the exact solution methods such as  branch-and-bound. The most promising formulation was obtained by \citet{adams1994} by applying level-1 refactorization and linearization technique (RLT1) to the QAP. This was considered to be one of the best linearizations at the time, because it yielded strong LP relaxation bound. \citet{adams1994} developed an iterative algorithm based on the Lagrangian dual ascent to obtain a lower bound for the QAP. Later \citet{hahn1998} developed an augmented dual ascent scheme (with simulated annealing), which yielded a lower bound which was close to the LP relaxation bound. 
This linearization technique was extended to RLT2 by \citet{adams2007}, which contains $O(n^6)$ variables and constraints; and RLT3 by \citet{hahn2012}, which contains $O(n^8)$ variables and constraints. These two formulations provide much stronger lower bounds as compared to RLT1, and for many problem instances they are able to match the optimal objective value of the QAP. However, it is extremely difficult to solve these linearization models using primal methods, because of the curse of dimensionality.  \citet{ramakrishnan2002} used Approximate Dual Projective (ADP) method to solve the LP relaxation of the RLT2 formulation of \cite{ramachandran1996}, which was limited to the problems with size $n = 12$. \citet{adams2007} and \citet{hahn2012} developed a dual ascent based algorithms to find strong lower bounds on RLT2 and RLT3 respectively, and used them to solve QAPs with $n\leq 30$.  
As observed by \citet{hahn2012}, LP relaxations of RLT2 and RLT3 provide strong lower bounds on the QAP, with RLT3 being the strongest. However, due to the large number of variables and constraints in RLT3, tremendous amount of memory is required to handle even the small/medium-sized QAP instances. In comparison, the RLT2 formulation has lesser memory requirements and it provides sufficiently strong lower bounds. 

For obtaining a lower bound on the QAP using RLT2 dual ascent, we need to solve $O(n^4)$ LAPs and update $O(n^6)$ Lagrangian multipliers during each iteration, which can become computationally intensive. However, as described in Section \ref{sec:rlt2da}, this algorithm can benefit from parallelization on an appropriate parallel computing architecture. In recent years, there have been significant advancements in the graphics processing hardware. Since graphics processing tasks generally require high data parallelism, the Graphics Processing Units (GPUs) are built as compute-intensive, massively parallel machines, which provide a cost-effective solution for high performance computing applications. Recently \citet{goncalves2017} developed a GPU-based dual ascent algorithm for RLT2, which shows significant parallel speed up as compared to the sequential algorithm. Although it is very promising, their algorithm is limited to a single GPU and not scalable for large problems. 
In our previous work \citep{date2016gpu}, we developed a GPU-accelerated Hungarian algorithm, which was shown to outperform state-of-the-art sequential and multi-threaded CPU implementations. In this work, we are proposing a distributed version of the RLT2 Dual Ascent algorithm (which makes use of our GPU-accelerated LAP solver)  and a parallel branch-and bound algorithm, specifically designed for the CUDA enabled NVIDIA GPUs, for solving large instances of the QAP to optimality. These algorithms make use of the hybrid MPI+CUDA architecture, on the GPU cluster offered by the Blue Waters Supercomputing facility at the University of Illinois at Urbana-Champaign. This research is radical because, to the best of our knowledge, this is the first scalable GPU-based algorithm that can be used for solving large QAPs in a grid setting.  

The rest of the paper is organized as follows. Section \ref{sec:rlt2da} 
describes the RLT2 formulation and the concepts of the sequential dual ascent algorithm. Sections \ref{sec:prlt2da}  and \ref{sec:ch4:bnb} describe the various stages of our parallel algorithm, and an implementation on the multi-GPU architecture. Section \ref{sec:ch4:comp} contains the experimental results on the instances from the QAPLIB \citep{qaplib1997}. Finally, the paper is concluded in Section \ref{sec:qap:concl} with a summary and some directions for future research.


\section{RLT2 Formulation and Dual Ascent Algorithm} \label{sec:rlt2da}

In this section we will explain in detail the concepts of RLT2 formulation and the dual ascent algorithm from \citet{adams1994} and \citet{adams2007}, which can be used to obtain lower bounds on the QAP.

\subsection{RLT2 Formulation for the QAP} \label{sec:rlt2}

As explained by \citet{adams2007}, the refactorization-linearization technique can be applied to the QAP formulation \eqref{eq:qap:obj}-\eqref{eq:qap:con3}, to obtain an instance of MILP. Henceforth, it is assumed that the indices $i$, $j$, $p$, $q$, etc., go from $1$ to $n$ unless otherwise stated. Initially, in the ``refactorization'' step, the constraints \eqref{eq:qap:con1} and  \eqref{eq:qap:con2} are multiplied by variables $x_{ip}, \forall i, p$. After removing the invalid variables of the form $x_{ip}\cdot x_{iq}$ and omitting the trivial constraints $x_{ip} \cdot x_{ip} = x_{ip}$ we obtain $2n^2(n-1)$ new constraints of the form $\sum_{j\neq i} x_{jq} \cdot x_{ip} = x_{ip},\; \forall i, p, q$; and $\sum_{q\neq p} x_{jq} \cdot x_{ip} = x_{ip},\; \forall i, j, p$. Then, in the ``linearization'' step, the product $x_{ip}\cdot x_{jq}$ is replaced by a new variable $y_{ijpq}$ with cost coefficient $C_{ijpq} = f_{ij}\cdot d_{pq}$; and a set of $\frac{n^2(n-1)^2}{2}$ constraints of the form $y_{ijpq} = y_{jiqp}$ are introduced to signify the symmetry of multiplication. The resulting formulation is called RLT1 by \citet{adams2007}, which is depicted below:
\begin{flalign}
\text{RLT1: } \min & \sum_{i} \sum_{p} b_{ip}x_{ip} + \sum_{i} \sum_{j \neq i} \sum_{p} \sum_{q \neq p} C_{ijpq} y_{ijpq}; \label{eq:rlt1:obj}\\
\text{s.t. } & \eqref{eq:qap:con1}-\eqref{eq:qap:con3} \notag \\
             & \sum_{q \neq p} y_{ijpq} = x_{ip}, & \forall (i\neq j ,p); \label{eq:rlt1:con1} \\
             & \sum_{j \neq i} y_{ijpq} = x_{ip}, & \forall (i, p\neq q); \label{eq:rlt1:con2} \\
             & y_{ijpq} = y_{jiqp}, & \forall (i< j, p\neq q); \label{eq:rlt1:con3} \\
             & y_{ijpq} \geq 0, & \forall (i\neq j, p\neq q). \label{eq:rlt1:con4}
\end{flalign}

\begin{result}\label{obs:rlt1equiv}
The RLT1 formulation is equivalent to the QAP, i.e., a feasible solution to RLT1 is also feasible to the QAP with the same objective function value \citep{adams1994}.
\end{result}

Now the refactorization-linearization technique is applied on the RLT1 formulation. During the refactorization step, the constraints \eqref{eq:rlt1:con1}-\eqref{eq:rlt1:con4} are multiplied by variables $x_{ip}, \forall i, p$. The product $y_{jkqr} \cdot x_{ip}$ is replaced by a new variable $z_{ijkpqr}$, with a cost coefficient of $D_{ijkpqr}$. The resulting RLT2 formulation is depicted below:
\begin{flalign}
\text{RLT2: } \min & \sum_{i} \sum_{p} b_{ip}x_{ip} + \sum_{i} \sum_{j \neq i} \sum_{p} \sum_{q \neq p} C_{ijpq}y_{ijpq} \notag \\ 
 & + \sum_{i} \sum_{j \neq i} \sum_{k \neq i,j} \sum_{p} \sum_{q \neq p} \sum_{r \neq p,q}  D_{ijkpqr}z_{ijkpqr}; \label{eq:rlt2:obj}\\
\text{s.t. } & \eqref{eq:qap:con1}-\eqref{eq:qap:con3}; \notag \\
& \eqref{eq:rlt1:con1}-\eqref{eq:rlt1:con4}; \notag \\
             & \sum_{r \neq p, q} z_{ijkpqr} = y_{ijpq}, & \forall (i\neq j \neq k, p\neq q); \label{eq:rlt2:con1} \\
             & \sum_{k \neq i, j} z_{ijkpqr} = y_{ijpq}, & \forall (i\neq j, p\neq q\neq r); & \label{eq:rlt2:con2} \\
             & z_{ijkpqr} = z_{ikjprq} = z_{jikqpr}  = z_{jkiqrp} = z_{kijrpq} = z_{kjirqp}, & \forall (i < j< k, p\neq q \neq r); \label{eq:rlt2:con3} \\
             & z_{ijkpqr} \geq 0, &  \forall (i\neq j\neq k, p\neq q\neq r). \label{eq:rlt2:con4}
\end{flalign}

\begin{result}\label{obs:rlt2equiv}
The RLT2 formulation is equivalent to the QAP, i.e., a feasible solution to RLT2 is also feasible to the QAP with the same objective function value \citep{adams2007}.
\end{result}




The main advantage of using RLT2 formulation is that its LP relaxation (LPRLT2) obtained by relaxing the binary restrictions on $x_{ip}$ yields much stronger lower bounds than any other linearization from the literature. However, since this formulation has a large number of variables and constraints, primal methods are likely to fail for large QAPs (as observed by \citet{ramakrishnan2002}. \citet{adams1994} and  \citet{adams2007} addressed this problem by developing a solution procedure based on Lagrangian dual ascent. In the next sections we will briefly discuss Lagrangian duality and then explain the Lagrangian dual ascent algorithm for RLT2.


\subsection{Lagrangian Duality} \label{sec:ld}

Duality is an important concept in the theory of optimization. The Primal problem (P) and its Dual (D) share a very special relationship, known as the ``weak duality.'' If $\nu(\cdot)$ represents the objective function of a problem, then the weak duality states that $\nu(D) \leq \nu(P)$ for minimization problem.  Many algorithms make use of this relationship, in cases where one of these problems is easier to solve than its counterpart. 
The basis of these constructive dual techniques is the Lagrangian relaxation. Let us consider the following simple optimization problem:
\begin{equation}
\text{P: } \min \quad \mathbf{cx}; \quad \text{s.t.} \quad \mathbf{Ax} \geq \mathbf{b}; \quad \mathbf{x} \in X.
\end{equation}
Here, $\mathbf{Ax} \geq \mathbf{b}$ represent the complicating constraints and $\mathbf{x} \in X$ represent simple constraints. Then we can relax the complicating constraints and add them to the objective function using non-negative Lagrange multipliers $\mathbf{u}$, which gives rise to the following Lagrangian relaxation:
\begin{equation}
\text{LRP($\mathbf{u}$): }  \min \quad \mathbf{cx} + \mathbf{u}(\mathbf{b}-\mathbf{Ax}); \quad \text{s.t.} \quad  \mathbf{x} \in X.
\end{equation}

For any $\mathbf{u} \geq 0$, $\nu(\text{LRP}(\mathbf{u}))$ provides a lower bound on $\nu(P)$, i.e., $\nu(\text{LRP}(\mathbf{u})) \leq \nu(P)$. To find the best possible lower bound, we  solve the  Lagrangian dual problem $\text{LD($\mathbf{u}$): } \max_{\mathbf{u} \geq 0} \nu(\text{LRP}(\mathbf{u}))$.
Hence, the primary goal in these solution procedures is to systematically search for the Lagrange multipliers which maximize the objective function value of the Lagrangian dual problem. The following two solution procedures are most commonly used for obtaining these dual multipliers.


\paragraph{Subgradient Lagrangian Search.}  The subgradient search method operates on two important observations: (1) $\nu(\text{LRP}(\mathbf{u}))$ is a piecewise-linear concave function of $\mathbf{u}$; and (2) At some point $\mathbf{\hat{u}} \geq 0$, if $\mathbf{\hat{x}} \in X$ is a solution to LRP($\mathbf{\hat{u}}$), then $(\mathbf{b}-\mathbf{A\hat{x}})$ represents a valid subgradient of the function $\nu(\text{LRP}(\mathbf{\hat{u}}))$.
The subgradient search procedure is very similar to the standard gradient ascent procedure, where we advance along the (sub)gradients of the objective function until we reach some solution that is no longer improving. At that point, we calculate the new (sub)gradients and continue. The only disadvantage of using subgradients instead of the gradient is that it is difficult to characterize an accurate step-size which is valid for all the active subgradients. Therefore, taking an arbitrary step along the subgradients might worsen the objective function from time to time. However, for specific step-size rules, it is proved that the procedure  converges to the optimal solution asymptotically.


\paragraph{Lagrangian Dual Ascent.} Instead of using the subgradients in a naive fashion, they can be used to precisely figure out both the ascent direction and the step-size that gives us the ``best'' possible improvement in the objective function $\nu(\text{LRP}(\mathbf{u}))$. This is the crux of the dual ascent procedure. During each iteration of the dual ascent procedure, an optimization problem is solved to find a direction $\mathbf{d}$ for some dual solution $\mathbf{\hat{u}}$, which creates a positive inner product with every subgradient of $\nu(\text{LRP}(\mathbf{\hat{u}}))$, i.e., $\mathbf{d}(\mathbf{b} - \mathbf{Ax}) > 0, \forall \mathbf{x} \in X(\mathbf{\hat{u}})$. If no such direction is found, then the solution $\mathbf{\hat{u}}$ and corresponding $\mathbf{\hat{x}}$ is an optimal solution. Otherwise, the ``best'' step-size is established which gives the maximum improvement in the objective function along $\mathbf{d}$ to find a new dual solution. The most difficult part of the dual ascent algorithm is to find the step-size $\lambda$ that will provide a guaranteed ascent, while maintaining the feasibility of all the previous primal solutions, and more often than not, finding the optimal step-size is an NP-hard problem. However, the salient feature of the Lagrangian dual of RLT2 linearization is that the improving direction and step-size can be found without having to solve any optimization problem. This can be achieved by doing simple sensitivity analysis and maintaining the complementary slackness for the nonbasic $\mathbf{x}$, $\mathbf{y}$ and $\mathbf{z}$ variables in the corresponding LAPs. In the next section, we will discuss the features of the RLT2 linearization and its Lagrangian dual. 


\subsection{Sequential RLT2-DA Algorithm}

Let us consider the LP relaxation of the RLT2 formulation. Initially, the constraints \eqref{eq:rlt1:con3} and 
\eqref{eq:rlt2:con3} are relaxed and added to the objective function using the Lagrange multipliers $\mathbf{v} = \langle u_{ijpq}, 
v_{ijkpqr} \rangle$, to obtain the Lagrangian relaxation LRLT2. Let $\alpha, \beta, \gamma, \delta, \xi, \psi$ represent the dual variables corresponding to the constraints \eqref{eq:qap:con1}, \eqref{eq:qap:con2}, \eqref{eq:rlt1:con1}, \eqref{eq:rlt1:con2}, \eqref{eq:rlt2:con1}, \eqref{eq:rlt2:con2} respectively. 
Then for some fixed $\mathbf{\hat{v}}$, the Lagrangian relaxation $\text{LRLT2}(\mathbf{\hat{v}})$ and its dual $\text{DLRLT2}(\mathbf{\hat{v}})$ can be written as follows.
\begin{flalign}
\text{LRLT2($\mathbf{\hat{v}}$): } \min & \sum_{i} \sum_{p} b_{ip}x_{ip} + \sum_{i} \sum_{j \neq i} \sum_{p} \sum_{q \neq p} (C_{ijpq} - \hat{u}_{ijpq} )y_{ijpq} \notag \\ 
 & + \sum_{i} \sum_{j \neq i} \sum_{k \neq i,j} \sum_{p} \sum_{q \neq p} \sum_{r \neq p,q}  (D_{ijkpqr} - \hat{v}_{ijkpqr})z_{ijkpqr}; \label{eq:lrlt2:obj}\\
\text{s.t. } & \eqref{eq:qap:con1}-\eqref{eq:qap:con2}; \notag \\
& \eqref{eq:rlt1:con1}-\eqref{eq:rlt1:con2}; \notag \\
            & \eqref{eq:rlt2:con1}-\eqref{eq:rlt2:con2}; && \notag \\
            & x_{ip} \geq 0; \quad y_{ijpq} \geq 0; \quad z_{ijkpqr} \geq 0. 
\end{flalign}

\begin{flalign}
\text{DLRLT2($\mathbf{\hat{v}}$): } \max & \sum_{i} \alpha_i + \sum_{p} \beta_p; \label{eq:dlrlt2:obj}\\
\text{s.t. } & \alpha_i + \beta_p - \sum_{j\neq i} \gamma_{ijp} - \sum_{q \neq p} \delta_{ipq} \leq b_{ip}, && \forall i, p; \label{eq:dlrlt2:con1} \\
             & \gamma_{ijp} + \delta_{ipq} - \sum_{k \neq i,j} \xi_{ijkpq} - \sum_{r \neq p, q} \psi_{ijpqr} \leq C_{ijpq} - \hat{u}_{ijpq}, && \forall (i\neq j, p\neq q); \label{eq:dlrlt2:con2} \\
              & \xi_{ijkpq} + \psi_{ijpqr} \leq D_{ijkpqr} - \hat{v}_{ijkpqr}, && \forall (i\neq j\neq k, p\neq q\neq r); \label{eq:dlrlt2:con3} \\
             & \alpha_{i}, \beta_{p}, \gamma_{ijp}, \delta_{ipq}, \xi_{ijkpq}, \psi_{ijpqr} \in \mathbb{R}, && \forall (i \neq j \neq k, p \neq q \neq r) && \label{eq:dlrlt2:con4}
\end{flalign}


\paragraph{LAP Solution.} The problem $\text{DLRLT2}(\mathbf{\hat{v}})$ can be solved using the decomposition principle explained by \citet{adams2007}. To maximize the dual objective function \eqref{eq:dlrlt2:obj} with respect to constraints \eqref{eq:dlrlt2:con1}, we need large values of $\alpha$ and $\beta$, for which the term $\sum_{j\neq i} \gamma_{ijp} + \sum_{q \neq p} \delta_{ipq}$ needs to be maximized subject to constraints \eqref{eq:dlrlt2:con2}. This requires large values of $\gamma$ and $\delta$, for which, the term $\sum_{k \neq i,j} \xi_{ijkpq} + \sum_{r \neq p, q} \psi_{ijpqr}$ needs to be maximized with respect to constraints \eqref{eq:dlrlt2:con3}. Thus we have a three stage problem, as seen in Fig.\ \ref{fig:3st-rlt2}. 

\begin{figure}
\begin{center}
\includegraphics[width=0.75\columnwidth, page=4]{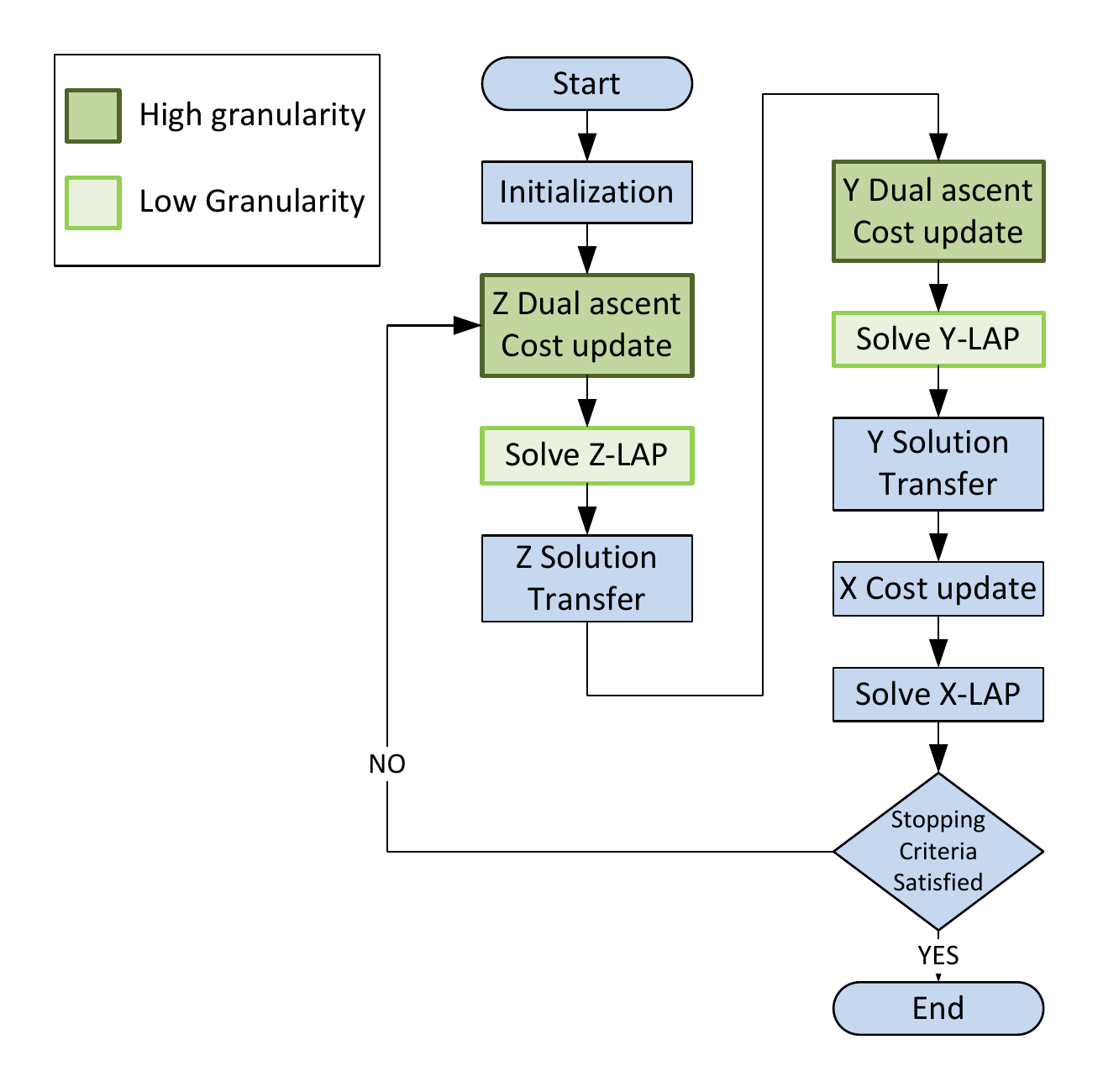}
\caption{Three stage solution of LRLT2($\mathbf{\hat{v}}$).}
\label{fig:3st-rlt2}
\end{center}
\end{figure}

In the first stage, for each $(i, j, p, q)$, with $i \neq j$ and $p \neq q$, we need to solve the problems: 
\begin{flalign}
\Theta_{ijpq}(\mathbf{\hat{v}}) = \max \left \lbrace \sum_{k \neq i,j} \xi_{ijkpq} + \sum_{r \neq p, q}  \psi_{ijpqr} \middle | \xi_{ijkpq} + \psi_{ijpqr} \leq \hat{D}_{ijkpqr},  \forall (k \neq i, j; r \neq p, q) \right\rbrace,&&
\end{flalign}
which are nothing but $n^2(n-1)^2$ Z-LAPs in their dual form (with modified cost coefficients). In the second stage, for each $(i, p)$, we need to solve the problems:
\begin{flalign}
\Delta_{ip}(\mathbf{\hat{v}}) = \max \left \lbrace \sum_{j \neq i} \gamma_{ijp} + \sum_{q \neq p}  \delta_{ipq} \middle | \gamma_{ijp} + \delta_{ipq} \leq \hat{C}_{ijpq} + \Theta_{ijpq}(\mathbf{\hat{v}}), \forall (j \neq i; q \neq p) \right\rbrace,&&
\end{flalign}
which are nothing but $n^2$ Y-LAPs in their dual form (with modified cost coefficients). In the final stage, we need to solve a single X-LAP (with modified cost coefficients):
\begin{flalign}
\nu(\text{LRLT2($\mathbf{\hat{v}}$)}) = \nu(\text{DLRLT2($\mathbf{\hat{v}}$)}) = \max \left \lbrace \sum_{i} \alpha_{i} + \sum_{p}  \beta_{p} \middle | \alpha_{i} + \beta_{p} \leq b_{ip} + \Delta_{ip}(\mathbf{\hat{v}}), \forall (i, p) \right\rbrace,&&
\end{flalign}
which gives us the required lower bound on RLT2. We can see that there are $O(n^4)$ LAPs and the number of cost coefficients in each LAP is $O(n^2)$. The worst case complexity of any primal-dual LAP algorithm for an input matrix with $n^2$ cost coefficients, is $O(n^3)$. Therefore, the overall solution complexity for solving DLRLT2($\mathbf{\hat{v}}$) is $O(n^7)$.

\paragraph{Dual Ascent.} The Lagrangian dual problem for LRLT2 is to find the best set of multipliers $\mathbf{v}^*$, so as to maximize the objective function value $\nu(\text{LRLT2})$, i.e.,
\begin{equation}\label{eq:ldrlt2}
\text{LDRLT2: } \max_{\mathbf{v}} \left\lbrace\nu(\text{LRLT2}(\mathbf{v})) \right\rbrace.
\end{equation} 
Since LRLT2($\mathbf{v}$) exhibits integrality property, due to the theorem by \citet{geoffrion1974}, the objective function value $\nu(\text{LDRLT2})$ cannot exceed the linear programming relaxation bound $\nu(\text{LPRLT2})$, obtained by relaxing the binary restrictions \eqref{eq:qap:con3}. Therefore, we can assert the following inequality:
\begin{equation}
\label{eq:lgduality}
\nu(\text{LDRLT2}) \leq \nu(\text{LPRLT2}) \leq \nu(\text{RLT2}) = \nu(\text{QAP}).
\end{equation}

To solve LDRLT2, one could employ a standard dual ascent algorithm. However, for LDRLT2, finding an ascent direction and a step-size can be done relatively easily, without having to solve any optimization problem. To this end, we will now describe the principle behind the Lagrangian dual ascent for LDRLT2. 

\begin{enumerate}
\item Let $\pi(\cdot)$ denote the reduced cost (or dual slack) of an LAP variable. Then, for some variable $z_{ijkpqr}$ in an optimal LAP solution,
\begin{equation}\label{eq:compslack}
z_{ijkpqr} = 1 \implies \pi(z_{ijkpqr}) = 0; \text{ and } z_{ijkpqr} = 0 \implies \pi(z_{ijkpqr}) \geq 0.
\end{equation}

\item From Equation \eqref{eq:rlt2:con3}, we know that for any $(i, j, k, p, q, r) : i < j < k, p \neq q \neq r$, the variable $z_{ijkpqr}$ is one of the six ``symmetrical'' variables appearing in that particular constraint, and in an optimal QAP solution, the values of all the six variables should be the same. 

\item If some $z_{ijkpqr} = 0$ and one of its symmetrical variables $z_{jikqpr} = 1$, then for the constraint $z_{ijkpqr} = z_{jikqpr}$, the direction $(1, -1)$ provides a natural direction of ascent for $(\hat{v}_{ijkpqr}, \hat{v}_{jikqpr})$, because it is a valid subgradient of LRLT2. To obtain a new dual solution, a step may be taken along this direction, i.e.,  $\hat{v}_{ijkpqr}$ may be increased (i.e., $D_{ijkpqr}$ may be decreased) and  $\hat{v}_{jikqpr}$ may be decreased (i.e., $D_{jikqpr}$ may be increased), using a valid step-size. 

\item While determining the step-size, the most important criterion is that the feasibility of the current dual variables $\alpha, \beta, \gamma, \delta, \xi, \psi$ must be maintained. According the constraint \eqref{eq:dlrlt2:con3}, infeasibility is incurred in the dual space if the new $\pi'(z_{ijkpqr}) = D_{ijkpqr} - \hat{v}_{ijkpqr} - \xi_{ijkpq} - \psi_{ijpqr} < 0$. This means that  $D_{ijkpqr}$ is  allowed to decrease by at most $\pi(z_{ijkpqr})$, and consequently, the symmetrical cost coefficient $D_{jikqpr}$ can be increased by the same amount. Since $z_{jikqpr}$ is basic, there is a good chance that this adjustment will increase $\nu(\text{LDRLT2})$ by some non-negative value, and therefore, this is a ``strong'' direction of ascent. 

\item For some other pair of variables, if $z_{ijkpqr} = z_{ikjprq} = 0$, then the direction $(1, -1)$ is also a valid direction, i.e., the cost coefficient $D_{ijkpqr}$ can be decreased by at most $\pi(z_{ijkpqr})$ and $D_{ikjprq}$ can be increased by the same amount, without incurring any dual infeasibility. Since both variables are non-basic, there will be no change in $\nu(\text{LDRLT2})$. This direction is a ``weak'' direction of ascent.

\item In an ``optimal'' dual ascent scheme, we would need to find ascent directions which will be ``strong'' for every pairwise constraint from Equation \eqref{eq:rlt2:con3}, and finding such directions would require significant computational effort. However, we can easily find a direction that is ``strong'' only for a subset of pairwise constraints, which may provide a non-negative increase in $\nu(\text{LDRLT2})$. In other words, we can select a non-basic variable $z_{ijkpqr}$, decrease its cost coefficient by some amount $0 < \epsilon \leq \pi(z_{ijkpqr})$ and increase the cost coefficients of the five symmetrical variables by some fraction of $\epsilon$. If some of the directions happen to be ``strong,'' then the objective function $\nu(\text{LDRLT2})$ will experience non-negative increase, otherwise it will remain the same. This is the crux of the dual ascent procedure. Mathematically, we adjust the  dual multipliers using the rule:
\begin{align}
\label{eq:da:rule1}
v_{ijkpqr} \leftarrow v_{ijkpqr} + \kappa^z \pi(z_{ijkpqr});\notag\\
v_{ikjprq} \leftarrow v_{ikjprq} - \phi^z_1 \kappa^z \pi(z_{ijkpqr}); \notag\\
v_{jikqpr} \leftarrow v_{jikqpr} - \phi^z_2 \kappa^z \pi(z_{ijkpqr});\notag\\
v_{jkiqrp} \leftarrow v_{jkiqrp} - \phi^z_3 \kappa^z \pi(z_{ijkpqr});\notag\\
v_{kijrpq} \leftarrow v_{kijrpq} - \phi^z_4 \kappa^z \pi(z_{ijkpqr}); \notag\\
v_{kjirqp} \leftarrow v_{kjirqp} - \phi^z_5 \kappa^z \pi(z_{ijkpqr}).
\end{align}
Here, $0 \leq \kappa^z \leq 1$, $0\leq \phi^z_{\cdot} \leq 1$, and $\phi^z_1 + \phi^z_2 + \phi^z_3 + \phi^z_4 + \phi^z_5 = 1$. Similarly, for updating the dual multipliers $u_{ijpq}$ corresponding to constraints \eqref{eq:rlt1:con3}, we can write the following  rule.
\begin{align}
\label{eq:da:yrule1}
u_{ijpq} \leftarrow u_{ijpq} + \varphi \pi(y_{ijpq});\notag\\
u_{jiqp} \leftarrow u_{jiqp} - \varphi \pi(y_{ijpq}).
\end{align}
Here, $0\leq \varphi \leq 1$. We refer to it as ``Type 1 ascent rule.''

\item Now, let us consider the constraint \eqref{eq:dlrlt2:con2}. After applying Type 1 rule and solving the corresponding Z-LAP(s); for some $(i, j, p, q)$, it is possible that $\gamma_{ijp} + \delta_{ipq} - \Theta_{ijpq} < C_{ijpq}$, i.e., $\pi(y_{ijpq}) > 0$. In this case,  $\Theta_{ijpq}$ can be decreased by $\pi(y_{ijpq})$, by decreasing the cost coefficients $D_{ijkpqr}, \forall k, r$ by an amount $\frac{\pi(y_{ijpq})}{(n-2)}$. This allows us to increase the cost coefficients of the symmetrical variables, providing the objective functions of the corresponding LAPs a chance to grow. Mathematically, we adjust the  dual multipliers using the rule:
\begin{align}
\label{eq:da:rule2}
v_{ijkpqr} \leftarrow v_{ijkpqr} + \kappa^y \frac{\pi(y_{ijpq})}{(n-2)};\notag\\
v_{ikjprq} \leftarrow v_{ikjprq} - \phi^y_1 \kappa^y \frac{\pi(y_{ijpq})}{(n-2)};\notag\\
v_{jikqpr} \leftarrow v_{jikqpr} - \phi^y_2 \kappa^y \frac{\pi(y_{ijpq})}{(n-2)};\notag\\
v_{jkiqrp} \leftarrow v_{jkiqrp} - \phi^y_3 \kappa^y \frac{\pi(y_{ijpq})}{(n-2)};\notag\\
v_{kijrpq} \leftarrow v_{kijrpq} - \phi^y_4 \kappa^y \frac{\pi(y_{ijpq})}{(n-2)};\notag\\
v_{kjirqp} \leftarrow v_{kjirqp} - \phi^y_5 \kappa^y \frac{\pi(y_{ijpq})}{(n-2)}.
\end{align}
Here, $0 \leq \kappa^y \leq 1$, $0\leq \phi^y_{\cdot} \leq 1$, and $\phi^y_1 + \phi^y_2 + \phi^y_3 + \phi^y_4 + \phi^y_5 = 1$. We refer to it as ``Type 2 ascent rule.''

\item Finally, we can use a similar rule for constraint \eqref{eq:dlrlt2:con1}. For some $(i, p)$, if $\pi(x_{ip}) > 0$, we can decrease the cost coefficients $C_{ijpq}, \forall j, q$, by an amount of $\frac{\pi(x_{ip})}{(n-1)}$. This is equivalent to decreasing the cost coefficients $D_{ijkpqr}, \forall j, k, q, r$ by an amount $\frac{\pi(x_{ip})}{(n-1)(n-2)}$. Consequently, we can increase the cost coefficients of the symmetrical variables, potentially improving the objective function value of the corresponding LAPs. Mathematically, we adjust the  dual multipliers using the rule:
\begin{align}
\label{eq:da:rule3}
v_{ijkpqr} \leftarrow v_{ijkpqr} + \kappa^x \frac{\pi(x_{ip})}{(n-1)(n-2)};\notag\\
v_{ikjprq} \leftarrow v_{ikjprq} - \phi^x_1 \kappa^x \frac{\pi(x_{ip})}{(n-1)(n-2)};\notag\\
v_{jikqpr} \leftarrow v_{jikqpr} - \phi^x_2 \kappa^x \frac{\pi(x_{ip})}{(n-1)(n-2)};\notag\\
v_{jkiqrp} \leftarrow v_{jkiqrp} - \phi^x_3 \kappa^x \frac{\pi(x_{ip})}{(n-1)(n-2)};\notag\\
v_{kijrpq} \leftarrow v_{kijrpq} - \phi^x_4 \kappa^x \frac{\pi(x_{ip})}{(n-1)(n-2)};\notag\\
v_{kjirqp} \leftarrow v_{kjirqp} - \phi^x_5 \kappa^x \frac{\pi(x_{ip})}{(n-1)(n-2)}.
\end{align}
Here, $0 \leq \kappa^x \leq 1$, $0\leq \phi^x_{\cdot} \leq 1$, and $\phi^x_1 + \phi^x_2 + \phi^x_3 + \phi^x_4 + \phi^x_5 = 1$. We refer to it as ``Type 3 ascent rule.''

\item We can also implement a ``Type 4 ascent rule,'' in which we can generate two fractions $0 \leq \kappa^{lb}_{i'} \leq 1$ and $0 \leq \kappa^{lb}_{p'}\leq 1$ such that $(\kappa^{lb}_{i'} + \kappa^{lb}_{p'}) \leq 1$. Then we decrease the current lower bound $\nu(\text{LDRLT2})$ by the fraction $(\kappa^{lb}_{i'} + \kappa^{lb}_{p'})$, which is equivalent to decreasing the cost coefficients $b_{i'p}, \forall p$ by $\frac{\kappa^{lb}_{i'}\nu(\text{LDRLT2})}{n}$ and cost coefficients $b_{ip'}, \forall i$ by $\frac{\kappa^{lb}_{p'}\nu(\text{LDRLT2})}{n}$. This  is equivalent to decreasing the corresponding cost coefficients  $D_{i'jkpqr}, \forall j, k, p, q, r$ by an amount $\frac{\kappa^{lb}_{i'}\nu(\text{LDRLT2})}{n(n-1)(n-2)}$; and $D_{ijkp'qr}, \forall i, j, k, q, r$ by an amount $\frac{\kappa^{lb}_{p'}\nu(\text{LDRLT2})}{n(n-1)(n-2)}$. Consequently, we can increase the cost coefficients of the symmetrical variables, potentially improving the objective function values of the corresponding LAPs. This step deteriorates the current lower bound, however, the resulting redistribution provides a much greater increase in $\nu(\text{LDRLT2})$. This step can be implemented in the same spirit as the Simulated Annealing (SA) approach with a specific temperature schedule. \citet{hahn1998} reported stronger lower bounds for SA based dual ascent for RLT1, as compared to those of the naive dual ascent of \citet{adams1994}. Although it was not mentioned explicitly, we suspect that this approach was also used in dual ascent for RLT2 by \citet{adams2007}. In Section \ref{sec:ch4:comp}, we compare the lower bounds for various problems, with and without SA.

\item The overall step-size rule for Lagrangian dual ascent is a combination of the four rules discussed above. The solution complexity of the dual ascent phase is $O(n^6)$, which is same as the upper bound on the number of cost coefficients.

\end{enumerate}

\paragraph{Procedure RLT2-DA.} Once the dual multipliers are updated,  the LAPs need to be re-solved to obtain an improved $\nu(\text{LDRLT2})$, which is also a lower bound on the QAP. Thus the RLT2-DA procedure iterates between the LAP solution phase and the dual ascent phase, until a specified optimality gap has been achieved; a specified iteration limit has been reached; or a feasible solution to the QAP has been found. The steps of RLT2-DA procedure are depicted in Algorithm \ref{pro:rlt2da}.

\begin{algorithm}[htb]
\caption{RLT2-DA.}
\label{pro:rlt2da}
\begin{enumerate}
\item Initialization:
\begin{enumerate}
\item Initialize $m \leftarrow 0$, $\mathbf{v}^m \leftarrow \mathbf{0}$, $\bar{\nu}(\text{LDRLT2}) \leftarrow -\infty$, and $\text{GAP} \leftarrow \infty$.
\item Initialize $\mathbf{b}'$, $\mathbf{C}'$ and $\mathbf{D}'$.
\end{enumerate}
\item Termination: Stop if $m > \text{ITN\_LIM}$ or
$\text{GAP} < \text{MIN\_GAP}$ or
$\text{Feasibility check} = true$.
\item Z-LAP solve:
\begin{enumerate}
\item Update the dual multipliers $v^{m}_{ijkpqr} \leftarrow v^{m-1}_{ijkpqr} + \lambda_{ijkpqr}$. 
\item Update $D'_{ijkpqr} \leftarrow D'_{ijkpqr} - v^m_{ijkpqr}, \forall(i\neq j\neq k, p\neq q\neq r)$
\item Solve $n^2(n-2)^2$ Z-LAPs of size $(n-2)\times (n-2)$ and cost coefficients $\mathbf{D}'$.
\item Let $\Theta_{ijpq}(\mathbf{v}^m) \leftarrow \nu(\text{Z-LAP}(i, j, p, q)), \; \forall(i \neq j, p \neq q)$.
\item Update $C'_{ijpq} \leftarrow C'_{ijpq}  + \Theta_{ijpq}(\mathbf{v}^m), \forall (i \neq j, p \neq q)$.
\end{enumerate}
\item Y-LAP solve:
\begin{enumerate}
\item Update the dual multipliers $u^{m}_{ijpq} \leftarrow u^{m-1}_{ijpq} + \mu_{ijpq}$.
\item Update $C'_{ijpq} \leftarrow C'_{ijpq} - u^m_{ijpq}, \forall (i \neq j, p \neq q)$.
\item Solve $n^2$ Y-LAPs of size $(n-1)\times (n-1)$ and cost coefficients $\mathbf{C}'$.
\item Let $\Delta_{ip}(\mathbf{v}^m) \leftarrow \nu(\text{Y-LAP}(i, p)), \; \forall(i, p)$.
\end{enumerate}
\item X-LAP solve:
\begin{enumerate}
\item Update $b'_{ip} \leftarrow b_{ip} + \Delta_{ip}(\mathbf{v}^m), \; \forall (i, p)$.
\item Solve a single X-LAP of size $n\times n$ and cost coefficients $\mathbf{b}'$.
\item Update $\nu(\text{LRLT2}(\mathbf{v}^m)) \leftarrow \nu(\text{X-LAP})$.
\item If $\bar{\nu}(\text{LDRLT2}) < \nu(\text{LRLT2}(\mathbf{v}^m))$, update $\bar{\nu}(\text{LDRLT2}) \leftarrow \nu(\text{LRLT2}(\mathbf{v}^m))$ and $\text{GAP}$.
\end{enumerate}

\item Update $m\leftarrow m+1$. Return to Step 2. 

\end{enumerate}
\end{algorithm}

\paragraph{Feasibility Check.} To check whether the primal-dual feasibility has been achieved or not, the complementary slackness principle can be used. After solving the X-LAP and obtaining a primal solution $\mathbf{x}$, we construct feasible $\mathbf{y}$ and $\mathbf{z}$ vectors; and check whether the dual slack values $\bm{\pi}(\mathbf{y})$ and $\bm{\pi}(\mathbf{z})$ corresponding to this primal solution are compliant with Equation \eqref{eq:compslack}. If this is true, then a feasible solution has been found, which also happens to be optimal to the QAP.  Otherwise we continue to update the dual multipliers and re-solve LRLT2. 

\paragraph{Algorithm Correctness.} For the sake of completeness, we will now prove that the RLT2-DA provides us with a sequence of non-decreasing lower bounds on the QAP. This result has been adapted from the result by \citet{adams1994}.

\begin{theorem} \label{thm:dalb}
Given the input parameters $0 \leq \kappa \leq 1$, $0 \leq \phi \leq 1$, and $\sum \phi = 1$, the RLT2-DA provides a non-decreasing sequence of lower bounds.
\end{theorem}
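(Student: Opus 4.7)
The plan is to invoke LP duality on each Lagrangian subproblem. Since $\text{LRLT2}(\mathbf{v})$ is a bounded, feasible LP, strong duality gives $\nu(\text{LRLT2}(\mathbf{v})) = \nu(\text{DLRLT2}(\mathbf{v}))$, so it suffices to exhibit a dual-feasible solution for $\text{DLRLT2}(\mathbf{v}^{m+1})$ whose objective value equals $\nu(\text{LRLT2}(\mathbf{v}^m))$. Weak duality then yields $\nu(\text{LRLT2}(\mathbf{v}^{m+1}))\geq\nu(\text{LRLT2}(\mathbf{v}^m))$, which is the desired monotonicity.

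First I would fix the optimal dual solution $(\alpha^m,\beta^m,\gamma^m,\delta^m,\xi^m,\psi^m)$ obtained at iteration $m$ from the three-stage LAP decomposition, together with its non-negative reduced costs $\pi^m(x_{ip})$, $\pi^m(y_{ijpq})$, $\pi^m(z_{ijkpqr})$. The objective of $\text{DLRLT2}$ depends only on $\alpha$ and $\beta$, so any modification to $\gamma,\delta,\xi,\psi$ will preserve the value $\sum_i\alpha^m_i+\sum_p\beta^m_p$. The constraints on $\alpha_i+\beta_p$ involve neither $\mathbf{u}$ nor $\mathbf{v}$, so they continue to hold verbatim at $\mathbf{v}^{m+1}$.

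Next I would verify the $y$- and $z$-constraints rule-by-rule. Under the Type 1 rules \eqref{eq:da:rule1}--\eqref{eq:da:yrule1}, each $u_{ijpq}$ is increased by $\varphi\,\pi^m(y_{ijpq})$ and each $v_{ijkpqr}$ by $\kappa^z\,\pi^m(z_{ijkpqr})$; since $\varphi,\kappa^z\in[0,1]$, the decrease in the RHS of the corresponding dual constraint is bounded by the pre-existing slack, so feasibility is preserved with residual slack $(1-\varphi)\pi^m(y_{ijpq})\geq 0$ and $(1-\kappa^z)\pi^m(z_{ijkpqr})\geq 0$. The five symmetric $v$-entries are \emph{decreased}, which only enlarges their RHS and trivially maintains their constraints. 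For the Type 2 and Type 3 cascades \eqref{eq:da:rule2}--\eqref{eq:da:rule3}, the $v$-increments on a given entry are not directly bounded by $\pi^m(z_{ijkpqr})$; I would absorb them by correspondingly shifting $\xi^m_{ijkpq}$ and $\psi^m_{ijpqr}$ downward by $\kappa^y\pi^m(y_{ijpq})/(n-2)$, respectively $\kappa^x\pi^m(x_{ip})/[(n-1)(n-2)]$, uniformly across the cells of each affected Z-LAP, and similarly propagate the Type 3 adjustments up through the Y-layer. Because $\xi$ and $\psi$ do not appear in the objective, the overall dual value is unchanged.

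The main obstacle is the composite bookkeeping when Types 1, 2, and 3 concurrently modify the same symmetry class of $v_{ijkpqr}$ and cascade through the Z-LAPs into the Y-LAPs and then into the X-LAP. The technical heart of the argument, following the $\text{RLT1}$ proof of \citet{adams1994}, is to check rule-by-rule that each cumulative decrement in a $u$- or $v$-entry is bounded by the slack available in the corresponding LAP dual constraint, with $\kappa,\varphi\leq 1$ controlling the single-rule contribution and $\sum\phi_{\cdot}=1$ ensuring that across any six-element symmetry class the decrement at the target entry is exactly balanced by increments at its five symmetric partners. The factors $1/(n-2)$ and $1/[(n-1)(n-2)]$ in \eqref{eq:da:rule2}--\eqref{eq:da:rule3} are calibrated so that the $(n-2)\times(n-2)$ uniform cost shifts in the Z-LAPs translate cleanly into the $y$- and $x$-layer slack budgets. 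Once this verification is complete, the constructed $(\alpha^m,\beta^m,\gamma^m,\delta^m,\tilde\xi,\tilde\psi)$ is feasible for $\text{DLRLT2}(\mathbf{v}^{m+1})$ with objective value $\nu(\text{LRLT2}(\mathbf{v}^m))$, and the theorem follows from weak duality.
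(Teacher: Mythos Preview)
Your approach is correct in spirit and differs genuinely from the paper's. You build an explicit dual-feasible point for $\text{DLRLT2}(\mathbf{v}^{m+1})$ with objective $\nu(\text{LRLT2}(\mathbf{v}^m))$ by keeping $(\alpha^m,\beta^m)$ and lowering $\xi,\psi$ (and, for Type~3, $\gamma$) layer by layer to absorb the RHS decrements, then invoke weak duality. The paper instead substitutes \eqref{eq:ch4:proof:2} into the $z$-constraint, rewrites every RHS as a sum of two nonnegative pieces, and uses the LP superadditivity inequality $\max_{\Psi}\{\cdot\}\geq\max_{\Psi_1}\{\cdot\}+\max_{\Psi_2}\{\cdot\}$ (Equation~\eqref{eq:ch4:proof:4}), identifying $\max_{\Psi_1}$ with $\nu(\text{LRLT2}(\mathbf{v}^m))$ and $\max_{\Psi_2}\geq 0$ because its right-hand sides $(1-\kappa^x)\pi(x)$, $(1-\kappa^y)\pi(y)$, $(1-\kappa^z)\pi(z)+\Omega$ are all nonnegative. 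The splitting trick handles all three rule types uniformly in one line, with no need to chase the cascades through the layers; your construction is more elementary (only weak duality, no superadditivity lemma) but the price is exactly the ``composite bookkeeping'' you flag: you must check that shifting $\xi$ by $\kappa^y\pi(y)/(n-2)$ and $\psi$ by $\kappa^x\pi(x)/[(n-1)(n-2)]$ pushes the deficit up to the $y$-layer, and that a further shift of $\gamma$ by $\kappa^x\pi(x)/(n-1)$ pushes it to the $x$-layer, leaving residual slacks $(1-\kappa^y)\pi(y)$ and $(1-\kappa^x)\pi(x)$. That arithmetic does close, so your plan goes through. One caution: if you also apply the Type~1 rule on $u$ in the same iteration, your $y$-slack after all adjustments becomes $(1-\varphi-\kappa^y)\pi(y)$, which need not be nonnegative; the paper sidesteps this by taking $u^{m+1}=u^m$ without loss of generality, and you should do the same or treat the $u$- and $v$-updates as separate monotone steps.
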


\begin{proof}

Let us consider LRLT2 at some iterations $m$ and $m+1$, with the corresponding dual multipliers $\mathbf{v}^{m}$ and $\mathbf{v}^{m+1}$. To prove the theorem we need to show that $\nu(\text{LRLT2}(\mathbf{v}^{m+1})) \geq \nu(\text{LRLT2}(\mathbf{v}^{m}))$. Consider the following dual of $\text{LRLT2}(\mathbf{v}^{m+1})$. Note that we have not shown the conditions on the indices $i\neq j\neq k, p\neq q\neq r$ for the sake of brevity.
\begin{flalign}
\text{DLRLT2}(\mathbf{v}^{m+1}) = \max & \sum_{i} \alpha_i + \sum_{p} \beta_p; \\
\text{s.t. } & \alpha_i + \beta_p - \sum_{j\neq i} \gamma_{ijp} - \sum_{q \neq p} \delta_{ipq} \leq b_{ip}; \\
             & \gamma_{ijp} + \delta_{ipq} - \sum_{k \neq i,j} \xi_{ijkpq} - \sum_{r \neq p, q} \psi_{ijpqr} \leq C_{ijpq} - u^{m+1}_{ijpq};\\
              & \xi_{ijkpq} + \psi_{ijpqr} \leq D_{ijkpqr} - v^{m+1}_{ijkpqr};  \label{eq:ch4:proof:1} \\
             & \alpha_{i}, \beta_{p}, \gamma_{ijp}, \delta_{ipq}, \xi_{ijkpq}, \psi_{ijpqr} \sim \text{unrestricted}.&&
\end{flalign}
Let us assume, without the loss of generality, that $u^{m+1} = u^m$. We can substitute $v^{m+1}$ in Equation \eqref{eq:ch4:proof:1} with the following expression  arising from the three dual ascent rules \eqref{eq:da:rule1}, \eqref{eq:da:rule2}, and \eqref{eq:da:rule3}.
\begin{equation}
v^{m+1}_{ijkpqr} =  v^{m}_{ijkpqr} + \kappa^z \pi(z_{ijkpqr}) +  \frac{\kappa^y \pi(y_{ijpq})}{(n-2)} +   \frac{\kappa^x \pi(x_{ip})}{(n-1)(n-2)} - \Omega_{ijkpqr},  \label{eq:ch4:proof:2}
\end{equation}
where, $\Omega_{ijkpqr} \geq 0$ represents the sum of fractional slacks $\pi(x)$, $\pi(y)$, and $\pi(z)$, of the five symmetrical variables of $z_{ijkpqr}$, as given by the rules \eqref{eq:da:rule1}--\eqref{eq:da:rule3}. After substituting Equation \eqref{eq:ch4:proof:2} in Equation \eqref{eq:ch4:proof:1} and re-arranging the terms, we obtain the following constraint:
\begin{align}
\xi_{ijkpq} + \psi_{ijpqr} \leq & D_{ijkpqr} -  v^{m}_{ijkpqr} - \pi(z_{ijkpqr}) -  \frac{\pi(y_{ijpq})}{(n-2)} -  \frac{\pi(x_{ip})}{(n-1)(n-2)} \notag \\ 
	& + (1-\kappa^z)\pi(z_{ijkpqr}) +  \frac{(1-\kappa^y) \pi(y_{ijpq})}{(n-2)} +   \frac{(1-\kappa^x) \pi(x_{ip})}{(n-1)(n-2)}  + \Omega_{ijkpqr}.  \label{eq:ch4:proof:3}
\end{align}
After replacing Equation \eqref{eq:ch4:proof:1} with Equation  \eqref{eq:ch4:proof:3}, and aggregating the $\frac{\pi(y_{ijpq})}{(n-2)}$ and $\frac{\pi(x_{ip})}{(n-1)(n-2)}$ terms, we can write the following expression:
\begin{equation}
\nu(\text{LRLT2}(\mathbf{v}^{m+1})) = \nu(\text{DLRLT2}(\mathbf{v}^{m+1})) = \max_{\mathbf{\Psi}} \left\lbrace \sum_{i} \alpha_i + \sum_{p} \beta_p \right\rbrace,
\end{equation}
where, $\mathbf{\Psi}$ represents the constraint set:
\begin{align}
\alpha_i + \beta_p - \sum_{j\neq i} \gamma_{ijp} - \sum_{q \neq p} \delta_{ipq} \leq  \left [ b_{ip} - \pi(x_{ip}) \right ] + \left [ (1-\kappa^x) \pi(x_{ip}) \right ]; \\
\gamma_{ijp} + \delta_{ipq} - \sum_{k \neq i,j} \xi_{ijkpq} - \sum_{r \neq p, q} \psi_{ijpqr}  \leq \left [ C_{ijpq} - \pi(y_{ijpq}) - u^m_{ijpq} \right ] + \left [ (1-\kappa^y) \pi(y_{ijpq}) \right ];\\
\xi_{ijkpq} + \psi_{ijpqr} \leq \left [ D_{ijkpqr} - \pi(z_{ijkpqr}) - v^{m}_{ijkpqr} \right ]
              + \left [ (1-\kappa^z)\pi(z_{ijkpqr}) + \Omega_{ijkpqr} \right ];  \\
\alpha_{i}, \beta_{p}, \gamma_{ijp}, \delta_{ipq}, \xi_{ijkpq}, \psi_{ijpqr} \sim \text{unrestricted}.
\end{align}
If we split the constraint set $\mathbf{\Psi}$ into two constraint sets $\mathbf{\Psi_1}$ and $\mathbf{\Psi_2}$, such that,
\begin{align}
\mathbf{\Psi_1} \quad : \quad & \alpha_i + \beta_p - \sum_{j\neq i} \gamma_{ijp} - \sum_{q \neq p} \delta_{ipq} \leq  b_{ip} - \pi(x_{ip}); \\
& \gamma_{ijp} + \delta_{ipq} - \sum_{k \neq i,j} \xi_{ijkpq} - \sum_{r \neq p, q} \psi_{ijpqr}  \leq C_{ijpq} - \pi(y_{ijpq}) - u^m_{ijpq};\\
& \xi_{ijkpq} + \psi_{ijpqr} \leq D_{ijkpqr}  - \pi(z_{ijkpqr}) - v^{m}_{ijkpqr};  \\
& \alpha_{i}, \beta_{p}, \gamma_{ijp}, \delta_{ipq}, \xi_{ijkpq}, \psi_{ijpqr} \sim \text{unrestricted};
\end{align}
\begin{align}
\mathbf{\Psi_2} \quad : \quad & \alpha_i + \beta_p - \sum_{j\neq i} \gamma_{ijp} - \sum_{q \neq p} \delta_{ipq} \leq  (1-\kappa^x) \pi(x_{ip}); \\
& \gamma_{ijp} + \delta_{ipq} - \sum_{k \neq i,j} \xi_{ijkpq} - \sum_{r \neq p, q} \psi_{ijpqr}  \leq (1-\kappa^y) \pi(y_{ijpq});\\
& \xi_{ijkpq} + \psi_{ijpqr} \leq (1-\kappa^z)\pi(z_{ijkpqr}) + \Omega_{ijkpqr};  \\
& \alpha_{i}, \beta_{p}, \gamma_{ijp}, \delta_{ipq}, \xi_{ijkpq}, \psi_{ijpqr} \sim \text{unrestricted};
\end{align}
then, from the theory of linear programming, we can show that:
\begin{equation}
\max_{\mathbf{\Psi}} \left \lbrace \sum_{i} \alpha_i + \sum_{p} \beta_p \right \rbrace \geq  \max_{\mathbf{\Psi_1}} \left  \lbrace \sum_{i} \alpha_i + \sum_{p} \beta_p \right \rbrace + \max_{\mathbf{\Psi_2}} \left \lbrace \sum_{i} \alpha_i + \sum_{p} \beta_p \right \rbrace .\label{eq:ch4:proof:4}
\end{equation}
Finally, we can assert that:  $\nu(\text{LRLT2}(\mathbf{v}^{m})) = \nu(\text{DLRLT2}(\mathbf{v}^{m})) = \max_{\Psi_1}  \left\lbrace \sum_{i} \alpha_i + \sum_{p} \beta_p \right\rbrace$, and due to the non-negativity of $\pi(x), \pi(y), \pi(z)$, we have: $ \max_{\Psi_2} \left\lbrace \sum_{i} \alpha_i + \sum_{p} \beta_p \right \rbrace \geq 0$. Therefore, 
\begin{equation}
\nu(\text{LRLT2}(\mathbf{v}^{m+1}))  \geq \nu(\text{LRLT2}(\mathbf{v}^{m})).
\end{equation}
Hence proved.
\end{proof}


\section{Accelerating RLT2-DA Algorithm Using a GPU Cluster} \label{sec:prlt2da}

The RLT2-DA algorithm described in the previous section was shown to outperform the Lagragian subgradient search and  many other algorithms in a branch-and-bound scheme, in terms of lower bound strength and the number of nodes fathomed, for problems with $n \leq 30$. However, for solving a QAP of size $n$ using RLT2-DA, we need to solve $O(n^4)$ LAPs of size $O(n^2)$, and update $O(n^6)$ dual multipliers. The overall solution complexity of sequential RLT2-DA is $O(n^7)$. An important observation about RLT2-DA is that the $O(n^4)$ LAPs can be solved independently of each other and similarly, the $O(n^6)$ Lagrangian multipliers can be updated independently of each other. Therefore, with the help of a correct parallel programming architecture, it is possible to achieve significant speedup over the sequential algorithm.

In the parallel algorithm that we have implemented, both phases of the sequential algorithm are executed on the GPU(s) by one or more CUDA kernels. 
We have chosen CUDA enabled NVIDIA GPUs as our primary architecture, because a GPU offers a large number of processor cores which can process a large number of threads in parallel. This is extremely useful for the dual update phase of RLT2-DA, since one CUDA thread can be assigned to each multiplier, and a host of multipliers can be updated simultaneously. Additionally, our efficient GPU-accelerated algorithm for the LAP can be used to speed up the LAP solution phase of RLT2-DA. Both these algorithms can be combined into a GPU-accelerated RLT2-DA solver engine, which can obtain strong lower bounds on the QAP, in an efficient manner. 


In the single GPU implementation of RLT2-DA solver, it may become challenging to store the $O(n^6)$ cost coefficients in the GPU memory, especially for larger problems. One of the alternatives to overcome this problem is to store the matrices in the CPU memory and copy them into the GPU memory as required. However, this approach requires $O(n^6)$ transfer between the CPU and GPU, which may introduce severe communication overhead. A better alternative is to split the cost coefficients (or LAP matrices) across multiple GPUs (if available), which also allows us to solve several LAPs concurrently. In this work, we have used grid architecture with multiple Processing Elements (PE), each containing one CPU-GPU pair. Communication between the different CPUs is accomplished using {\em message passing interface} (MPI). The overall algorithmic architecture is shown in Fig.\ \ref{fig:rlt2par} and the details of our implementation are described in the following sections.

\begin{figure}
\begin{center}
\includegraphics[width=0.85\columnwidth, page=3]{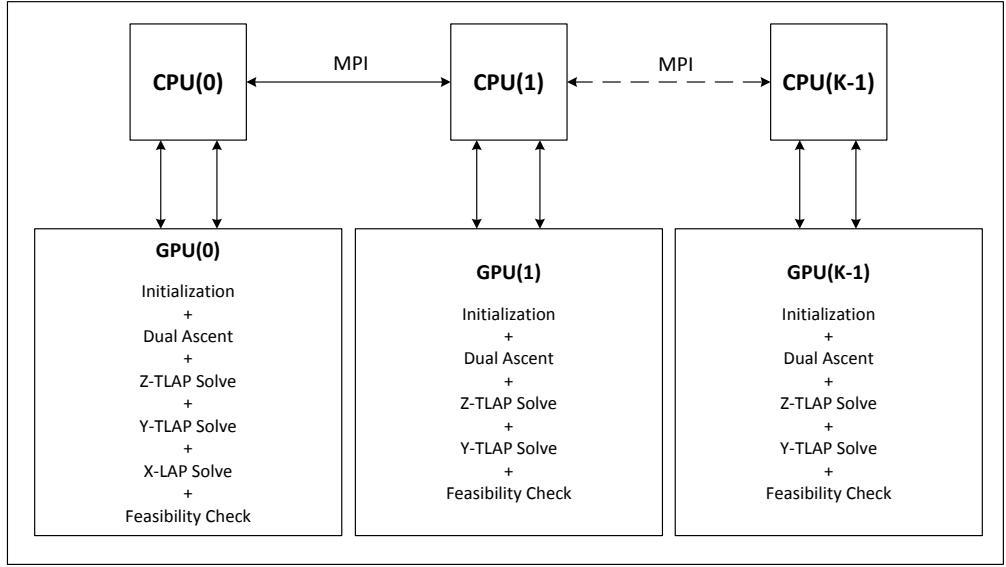}
\caption{Parallel/accelerated RLT2-DA.}
\label{fig:rlt2par}
\end{center}
\end{figure}

\subsection{Initialization} 

The program is initialized with $K$ MPI processes, equal to the number of PEs in the grid. It is assumed that one MPI process gets allocated to exactly one CPU. The CPU with rank 0 is chosen as the root. The cost matrices for the Y and Z-LAPs are split evenly across all the GPUs in the grid, i.e.,\ each device owns $M_z = \left \lceil \frac{n^2(n-1)^2}{K} \right \rceil$ number of Z-LAP matrices and $M_y = \left \lceil \frac{n^2}{K} \right \rceil$ number of Y-LAP matrices. The X-LAP matrix is owned only by the root GPU. 

\subsection{LAP Solution} 

All the LAPs are solved on the GPU using the {\em alternating-tree variant} of our GPU-accelerated Hungarian algorithm \citep{date2016gpu}. We know that our GPU-accelerated Hungarian algorithm is extremely efficient in solving large LAPs, rather than small LAPs. 
Therefore, all the LAPs owned by a particular GPU are combined and solved as a {\em tiled} LAP (or TLAP).  For example, if we have $M_z$ number of Z-LAP matrices (of size $(n-2)\times(n-2))$ on a particular GPU, then instead of solving them one at a time, we stack these LAP matrices and solve a single TLAP of size $M_z\times(n-2)\times (n-2)$.  The solution complexity for a TLAP is $O(M^{1.5}n^3)$, which is asymptotically worse than $O(Mn^3)$. However, in practice, we found that a single TLAP solves much faster than solving individual LAPs one by one. We suspect that this happens because of the execution overhead incurred due to repeated invocation of the CUDA kernels in the one-at-a-time approach versus invoking the kernels only once in the tiled approach.   

\subsection{Dual Ascent}

As mentioned earlier, Lagrangian multiplier update is quite straightforward to parallelize on a GPU. There is exactly one dual multiplier associated with one of the $z$ variables, and we can easily assign one GPU thread per element of the LAP cost matrices residing on a particular GPU. It is important to note that we do not need any additional data structures to store the dual multipliers $\mathbf{v}$, since we only need the updated cost coefficients $\mathbf{b}'$, $\mathbf{C}'$, and $\mathbf{D}'$ during  each iteration of RLT2-DA. Therefore, during the multiplier update step, these cost coefficients can be updated in-place with the specified ascent rule(s). 

For updating the dual multipliers (or cost coefficients), we need the dual slacks $\bm{\pi}(\mathbf{x})$, $\bm{\pi}(\mathbf{y})$, and $\bm{\pi}(\mathbf{z})$ of the symmetrical variables during each iteration, which might not be native to a particular GPU. In short, before each iteration, we need to transfer the arrays of dual variables $\bm{\alpha}, \bm{\beta}, \bm{\gamma}, \bm{\delta}, \bm{\xi}, \bm{\psi}$ and the modified cost matrices $\mathbf{b}'$, $\mathbf{C}'$, and $\mathbf{D}'$ between the various CPUs/GPUs, using MPI. This might incur a significant MPI communication overhead, since the matrix $\mathbf{D}'$ contains $O(n^6)$ elements. To alleviate this overhead, 
local copies of the symmetrical cost coefficients are stored on a GPU, for each of the $M_z(n-2)^2$ number of $D_{ijkpqr}$ cost coefficients owned by  that GPU. These local coefficients are kept up-to-date using the specified ascent rule. Therefore, we only need to transfer the dual variables $\bm{\alpha}, \bm{\beta}, \bm{\gamma}, \bm{\delta}, \bm{\xi}, \bm{\psi}$ and the cost matrices $\mathbf{b}', \mathbf{C}'$. 
This approach involves some duplication of work, however the communication complexity is reduced from $O(n^6)$ to $O(n^5)$. 

\subsection{Accelerated RLT2-DA and Variants}

The parallel algorithm for RLT2-DA is depicted in Algorithm \ref{pro:accrlt2da}. Most of the steps in this algorithm are the same as that of the sequential algorithm, with the exception that the LAP solution and the dual update phases are performed on the GPU, and MPI communication steps are added. 

\begin{algorithm}
\caption{Accelerated RLT2-DA.}
\label{pro:accrlt2da}
\begin{enumerate}
\item Initialization:
\begin{enumerate}
\item Initialize $m \leftarrow 0$, $\mathbf{v}^m \leftarrow \mathbf{0}$, $\bar{\nu}(\text{LDRLT2}) \leftarrow -\infty$, and $\text{GAP} \leftarrow \infty$.
\item Initialize $\mathbf{b}'$ on GPU(0).  Initialize $\mathbf{C}'$ and $\mathbf{D}'$ on respective GPUs.
\end{enumerate}
\item Termination: Stop if $m > \text{ITN\_LIM}$ or
$\text{GAP} < \text{MIN\_GAP}$ or
$\text{Feasibility check} = true$.

\item Z-LAP solve (parallely on $K$ GPUs):
\begin{enumerate}
\item Update the dual multipliers $v^{m}_{ijkpqr} \leftarrow v^{m-1}_{ijkpqr} + \lambda_{ijkpqr}$
\item Update $D'_{ijkpqr} \leftarrow D'_{ijkpqr} - v^m_{ijkpqr}, \forall(i\neq j\neq k, p\neq q\neq r)$
\item Solve Z-TLAP of size $M_z \times (n-2)\times (n-2)$ and cost coefficients $\mathbf{D}'$.
\item Let $\Theta_{ijpq}(\mathbf{v}^m) \leftarrow \nu(\text{Z-LAP}(i, j, p, q)), \; \forall(i \neq j, p \neq q)$.
\item Broadcast $\Theta_{ijpq}(\mathbf{v}^m)$, $\bm{\xi}_{ijpq}$, and $\bm{\psi}_{ijpq}$ to all CPUs/GPUs using \texttt{MPI\_Bcast}.
\item Update $C'_{ijpq} \leftarrow C'_{ijpq} + \Theta_{ijpq}(\mathbf{v}^m), \forall (i \neq j, p \neq q)$.
\end{enumerate}
\item Y-LAP solve (parallely on $K$ GPUs):
\begin{enumerate}
\item Update the dual multipliers $u^{m}_{ijpq} \leftarrow u^{m-1}_{ijpq} + \mu_{ijpq}$.
\item Update $C'_{ijpq} \leftarrow C'_{ijpq} - u^{m}_{ijpq}, \forall (i \neq j, p \neq q)$.
\item Solve Y-TLAP of size $M_y \times (n-1)\times (n-1)$ and cost coefficients $\mathbf{C}'$.
\item Let $\Delta_{ip}(\mathbf{v}^m) \leftarrow \nu(\text{Y-LAP}(i, p)), \; \forall(i, p)$.
\item Broadcast $\mathbf{C}'$, $\Delta_{ip}(\mathbf{v}^m)$, $\bm{\gamma}_{ip}$, and $\bm{\delta}_{ip}$ to all CPUs/GPUs using \texttt{MPI\_Bcast}.
\end{enumerate}
\item X-LAP solve (only on GPU(0)):
\begin{enumerate}
\item Update $b'_{ip} \leftarrow b_{ip} + \Delta_{ip}(\mathbf{v}^m), \; \forall (i, p)$.
\item Solve a single X-LAP of size $n\times n$ and cost coefficients $\mathbf{b}'$.
\item Update $\nu(\text{LRLT2}(\mathbf{v}^m)) \leftarrow \nu(\text{X-LAP})$.
\item If $\bar{\nu}(\text{LDRLT2}) < \nu(\text{LRLT2}(\mathbf{v}^m))$, update $\bar{\nu}(\text{LDRLT2}) \leftarrow \nu(\text{LRLT2}(\mathbf{v}^m))$ and $\text{GAP}$.
\item Broadcast $\bar{\nu}(\text{LDRLT2})$, GAP, $\mathbf{b}'$, $\bm{\alpha}$, and $\bm{\beta}$ to all CPUs/GPUs using \texttt{MPI\_Bcast}.
\end{enumerate}

\item Update $m\leftarrow m+1$. Return to Step 2. 

\end{enumerate}
\end{algorithm}

Two variants of the accelerated RLT2-DA algorithm are implemented, namely {\bf slow} RLT2-DA (S-RLT2-DA) and {\bf fast} RLT2-DA (F-RLT2-DA), which are based on Equation \eqref{eq:ch4:proof:4}. 
In the S-RLT2-DA variant, the LAPs with updated cost coefficients $\mathbf{b}'$, $\mathbf{C}'$, and $\mathbf{D}'$ are solved during each iteration, which corresponds to the left hand side of Equation \eqref{eq:ch4:proof:4}. The steps mentioned in Algorithm \ref{pro:accrlt2da} are essentially those of S-RLT2-DA. In the F-RLT2-DA variant, the LAPs are solved with the incremental cost coefficients (the fractional dual slacks $\bm{\pi}(\mathbf{x}), \bm{\pi}(\mathbf{y}), \bm{\pi}(\mathbf{z})$) from the right hand side of Equation \eqref{eq:ch4:proof:4}, and the result is added to the lower bound obtained during the previous iteration. This variant has smaller execution time (hence the name {\em fast}), since the incremental cost coefficient matrices are sparser as compared to the actual cost coefficient matrices. 
However, as the inequality suggests, the lower bound of S-RLT2-DA is stronger than that of F-RLT2-DA. Another advantage of using S-RLT2-DA is that during each iteration, we have the updated dual multipliers (in the form of cost coefficients $\mathbf{b}', \mathbf{C}', \mathbf{D}'$) which can be used as a starting solution for the children nodes in a branch-and-bound scheme. However, in F-RLT2-DA, recovering the actual dual multipliers is not so straightforward. 

For both the above variants, a stronger lower bound can be obtained by adopting a two-phase approach. In the first phase (Step 3b-1), Z-TLAPs with the cost coefficients $\mathbf{D}'$ are solved. During the second phase, initially (Step 3b-2), for each $(i < j < k, p\neq q\neq r)$, the six symmetrical $z$ variables are partitioned into two sets based on their dual slacks: $S_B = \{z: \pi(z) = 0\}$ and $S_N = \{z: \pi(z) > 0\}$. Then, the dual slacks of the symmetrical variables from $S_N$ are added; their cost coefficients are reduced by the corresponding $\pi(z)$; and the sum is evenly distributed across the cost coefficients of the symmetrical variables from $S_B$. Finally (Step 3b-3), the Z-TLAPs are re-solved and the algorithm continues to Step 3c. Since we are solving the TLAPS two times, this two-phase approach takes almost twice the time of the one-phase approach, but it provides the strongest lower bounds. 

A third variant is to use Simulated Annealing with some temperature schedule, in which the algorithm is allowed to redistribute a random fraction of the current lower bound among some of the $z$ variables (see Type 4 ascent rule in Section \ref{sec:rlt2da}). This deteriorating step provides the algorithm with an opportunity to get out of a local maximum, which further improves the lower bound.

In Section \ref{sec:ch4:comp}, we will compare the lower bounds and execution times for each of the above variants, which will provide significant insight to practitioners to make careful selection. We will refer to these variants as F1, F2, S1, S2, where the first letter is used for distinction between fast and slow variants, while the number indicates whether it is a single-phase or two-phase algorithm. 




\section{Parallel Branch-and-bound with Accelerated RLT2-DA} \label{sec:ch4:bnb}

Although the objective function value of the LP relaxation of RLT2 was shown to  provide tight lower bound (equal to the integer optimal) for  small QAPs ($n \leq 12$), the LP relaxation is expected to have a duality gap for medium and large QAPs. Also, due to the integrality of LRLT2$(\mathbf{v})$,  $\nu(\text{LDRLT2})$ can only ever reach the LP relaxation bound. Therefore, the LDRLT2 objective function value provides a lower bound on both LPRLT2 and QAP objective function values, and RLT2-DA cannot be used on its own to find exact solutions to large QAPs. To this end,  we need to use a branch-and-bound (B\&B) procedure to solve medium and large QAPs to optimality.

B\&B is a standard tree search procedure used for solving integer optimization problems. Each node in the B\&B tree corresponds to a subproblem in which a single variable is assigned a specific value. This assignment partitions the solution space into two or more disjoint subspaces.  Solving an LP relaxation of the subproblem at a particular node provides a lower bound on that node. A node and its children are fathomed if any one of the following three conditions are satisfied: (1) The lower bound at that node exceeds or equals the incumbent objective value; (2) The subproblem is infeasible; or (3) The subproblem has an integral solution. Fathomed nodes are not considered for further exploration and the whole branch is removed from the tree. Therefore, quality of the lower bound is of utmost importance, so as to explore as few nodes as possible. Another important consideration in B\&B scheme is the search strategy to be employed for exploring the tree. The tree could be explored using Breadth-First-Search (BFS), or Depth-First-Search (DFS), or Best-First-Search (BstFS), each of which has its own pros and cons. We used a hybrid BstFS+DFS approach, since it was more suitable for the problem under study.

The specifics of our parallel B\&B procedure  can be explained as follows. At the root level of the B\&B tree, none of the facility locations are fixed. At each subsequent level $\ell$ of the B\&B tree, the locations of some $(\ell-1)$ facilities are fixed. The $\ell^{th}$ facility is assigned to each one of the remaining $(n-\ell+1)$ locations, giving rise to $(n-\ell+1)$ children nodes at that level. This type of branching is called ``polytomic'' branching and it has been used for solving QAPs  by \citet{roucairol1987}, \citet{pardalos1989}, \citet{clausen1997}, and \citet{anstreicher2002} using other formulations and other lower bounding techniques.

For each node of the B\&B tree, the lower bound is obtained using our accelerated RLT2-DA executed by a bank of PEs (CPU-GPU pairs). Figure \ref{fig:rlt2par} depicts one such PE bank. Performing RLT2-DA on the root node produces the root lower bound (reported in Section \ref{sec:ch4:comp}). Parallelizing the B\&B procedure is quite straightforward since we can allocate multiple banks of PEs, such that each bank is responsible for a subset of unexplored nodes and corresponding sub-trees. Load balancing is a critical aspect of parallel B\&B so as to improve processor utilization. Load balancing can be achieved by precisely managing the queue of unexplored nodes and redistributing them over the idle PE banks as required. Figure \ref{fig:ch4:parbnb} shows the architecture used in our  parallel B\&B scheme.

\begin{figure}
\begin{center}
\includegraphics[width=\columnwidth, page=7]{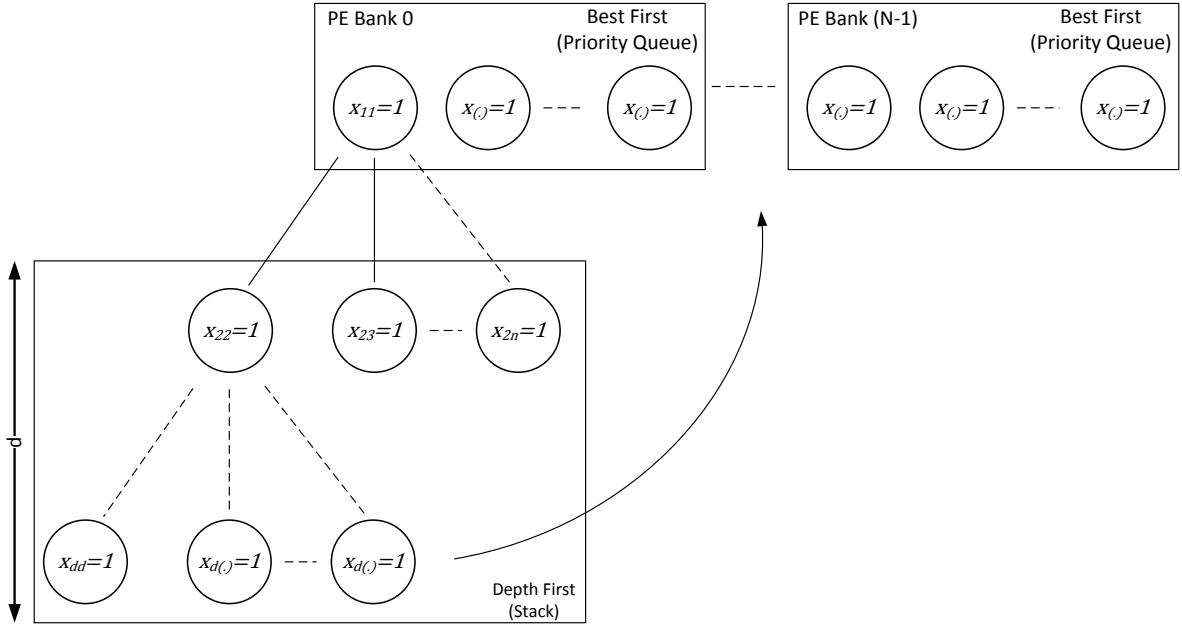}
\caption{Parallel branch-and-bound.}
\label{fig:ch4:parbnb}
\end{center}
\end{figure}

\begin{enumerate}

\item We begin the parallel B\&B with $N$ banks containing $K$ PEs each. CPU0 is designated as the ``Master Processor'' (MP),  which maintains a ``master list'' of unexplored nodes such that the node with the smallest lower bound is at the top of the list (essentially a ``heap''). This makes sure that the first PE Bank is always working on the nodes with the best bound (essentially a BstFS). The master list is seeded with some initial set of nodes from some level $\ell_{\text{init}}$. As an example, if $\ell_{\text{init}} = 0$, then the list contains only the root node in which none of the facilities are assigned to any of the locations. For $\ell_{\text{init}} = 1$, there will be $n$ nodes in which one of the facilities is assigned to all the $n$ locations.

\item The nodes from the master list are equally distributed among all the PEs, which explore the respective sub-trees in a DFS manner. In some of the QAP instances (e.g., Nugent instances) the locations are present on a grid, which allows us to apply symmetry elimination rules and eliminate a number of assignments which will have the same objective value. 

\item For each of the nodes allocated to a particular PE bank, a lower bound is calculated using our accelerated RLT2-DA. 
If a node at level $\ell$ cannot be fathomed 
, then it is branched upon by placing another $(\ell + 1)^{th}$ facility at all the available locations and generating $(n - \ell)$ children. For all these children, the dual multipliers of the parent node can be used as an initial solution (warm start), which saves us from calculating the dual multipliers from scratch. This significantly speeds up RLT2-DA execution.

\item The order in which the facilities (rows) are considered for branching is extremely important to ensure limited exploration of the B\&B tree. We experimented with the following  simple rules to generate this placement order ahead of time. We found that Rule 3 performs the best in terms of number of nodes explored. 
\begin{description}
\item[Rule 1:] The facilities (rows) to be placed are considered according to their indices, i.e., facility (row) 1 is placed first, facility (row) 2 is placed next, and so on. 
\item[Rule 2:] The first facility (row) to be placed is the one which has the highest total interaction; the second facility (row) to be placed is the one which has the highest total interaction with the  first facility (row); and so on. 
\item[Rule 3:] The first facility (row) to be placed is the one which has the lowest total interaction; the second facility (row) to be placed is the one which has the lowest total interaction with the  first facility (row); and so on. 
\end{description}

\item An important aspect of B\&B is to distinguish between the amount of time spent in improving the lower bound versus branching, which will produce a number of children with improved lower bounds. For this purpose, some early termination criteria, based on the observed improvement in the lower bound, can be used. If the observed improvement is not sufficient, then  RLT2-DA iterations are stopped and the node is branched upon. 

\item While the DFS strategy keeps all the PE banks fairly busy, it is possible that some PE banks may obtain ``easy'' nodes which can be fathomed fairly quickly. In such cases, those PE banks will remain idle, which is detrimental for the system utilization. Therefore, a load balancing scheme can be implemented similar to the one implemented by \citet{anstreicher2002}. In this scheme, the idle PE bank sends a request to the MP. The MP checks the request queue after every 300 seconds. If there is at least one processor that is idle, then  the unexplored nodes from all the PE banks are collected by the MP and redistributed evenly across all the PEs.

\item Finally, it may be beneficial to limit the DFS exploration up to a maximum depth $d$. This is because, we need to save the $O(n^6)$ dual multipliers associated with each node at a particular level, to be able to perform warm start on its children nodes. These multipliers are saved in the CPU memory of the corresponding PEs from the bank, and for depth $d$, the space complexity becomes $O(d\cdot n^6)$. 
Any unexplored nodes beyond the depth $d$ are collected and redistributed by the MP, and the memory is reset.

\end{enumerate}

The computational results for this parallel B\&B scheme coupled with the accelerated RLT2-DA procedure are presented in the next section.


\section{Computational Experiments}\label{sec:ch4:comp}

The accelerated RLT2-DA algorithm was coded in C++ and CUDA C programming languages and deployed on the Blue Waters Supercomputing Facility at the University of Illinois at Urbana-Champaign. Each PE consists of an AMD Interlagos model 6276 CPU, with 8 cores, 2.3GHz clock speed, and 32GB memory; connected to an NVIDIA GK110 ``Kepler" K20X GPU, with 2688 processor cores, and 6GB memory. 

Various computational studies were conducted on different variants of our parallel/accelerated RLT2-DA, with respect to the bound strength, scaling behavior, and performance in branch-and-bound procedure. For testing purposes, we used various solved and unsolved instances of size $20 \leq n \leq 42$ from the QAPLIB \citep{qaplib1997}. These computational tests are presented in the following sections.

As documented by \citet{goncalves2017}, with smart selection of the dual ascent parameters $\kappa$, $\phi$, and $\varphi$, we can get away with solving only half the Z-LAPs, which results in significant savings in both GPU memory and time. Specifically, for some $i< j < k, p\neq q\neq r$, we use: $\kappa^y = \kappa^x = 1$ for all six variables; $\kappa^z = 1$ and $\phi^z = \phi^y = \phi^x = 0$ for the lower order variables $z_{jikqpr}, z_{kijrpq}, z_{kjirqp}$; and $\kappa^z = \frac{2}{3}$ and $\phi^z = \phi^y = \phi^x = \frac{1}{2}$ for the upper order  variables $z_{ijkpqr}, z_{ikjprq}, z_{jkiqrp}$. This means that the partial dual slacks are  split only among the three upper order variables and the lower order variables will have 0 slack at the end of the ascent step. As a result, the Z-LAPs containing these variables, specifically Z-LAP($i, j, p ,q$), $\forall (i > j, p\neq q)$, can be omitted. After solving Z-LAPs and updating $C_{ijpq}, \forall i < j, p \neq q$, the multipliers $u_{ijpq}$ need to be adjusted using $\varphi = \frac{1}{2}$, which means that the updated dual slack $\pi(y_{ijpq})$ is split equally among $y_{ijpq}$ and $y_{jiqp}$.
 
For the SA based variants, the following annealing schedule was used. The initial temperature $T$ was set to $4\%$ of the best known upper bound. The fractions $\kappa^{lb}_i$ and $\kappa^{lb}_p$ were generated randomly $\forall i, p$; with a constraint that $\kappa^{lb} = \sum_i \kappa^{lb}_i + \sum_p \kappa^{lb}_p \leq 0.25$. This means that at most 25\% of the current lower bound was made available for redistribution using Type 4 ascent rule. The acceptance probability is calculated using the formula: $p_{acc} = \exp(-\kappa^{lb}/T)$. The value $\kappa^{lb}$ was redistributed if a randomly generated number $p_{rand} \leq p_{acc}$. The temperature is reduced by a factor of 0.99 after every 100 iterations, to make sure that the acceptance probability of redistributing a larger $\kappa^{lb}$ decreases with increasing number of iterations. 

Since accelerated RLT2-DA algorithm is memory intensive, instances of specific size requires a certain minimum number of GPUs to be able to fit all the necessary data structures. Table \ref{tbl:ch4:mingpu} lists the minimum number of PEs required to be able to comfortably solve the QAP instances of different sizes. Note that these numbers are derived according to the specifications of GPUs that we used for testing. These numbers might change if we use GPUs with specifications other than the ones mentioned earlier.

\begin{table}
\begin{center}
\caption{Minimum number of PEs.}
\label{tbl:ch4:mingpu}
{\begin{tabular}{|c|c|c|c|c|c|}
\hline
n & $\leq 27$ & 30 & 35 & 40 & 42\tabularnewline
\hline
Minimum \# of PEs & 1 & 2 & 4 & 10 & 15 \tabularnewline
\hline
\end{tabular}
}
\end{center}
\end{table}


\subsection{Comparison of RLT2-DA Variants}

To compare the strength of lower bounds, we used the Nug20 instance (which has $n=20$ facilities and locations) from the QAPLIB \citep{qaplib1997}. On this instance, we ran 2000 iterations of the different variants. The tests were performed with only a single PE. All the Z-LAPs were tiled into a single Z-TLAP, and all the Y-LAPs were tiled into a single Y-TLAP. For these tests, we noted the lower bounds and execution times. For the first iteration, i.e., for $\bm{v} = \bm{0}$, we obtain the Gilmore-Lawler bound of 2057. After that, RLT2-DA obtains an increasing sequence of lower bounds during the subsequent iterations, in accordance with Theorem \ref{thm:dalb}. The results are shown in Table \ref{tbl:ch4:nug20}, Fig.\  \ref{fig:ch4:nug20lb}, and Fig.\  \ref{fig:ch4:nug20times}. From these results, we can draw the following conclusions:

\begin{table}
\begin{center}
\caption{Bound strength of RLT2-DA variants on Nug20.}
\label{tbl:ch4:nug20}
{\resizebox{\columnwidth}{!}{%
\begin{tabular}{ccccccccccccc}
\hline
\multirow{2}{*}{Itn} & \multicolumn{3}{c}{F1} & \multicolumn{3}{c}{F2} & \multicolumn{3}{c}{S1} & \multicolumn{3}{c}{S2}\tabularnewline
\cline{2-13}
 & w/o SA & w/ SA & Time (s) & w/o SA & w/ SA & Time (s) & w/o SA & w/ SA & Time (s) & w/o SA & w/ SA & Time (s)\tabularnewline
\cline{1-13}
200 & 2440.87 & 2483.62 & 297.55 & 2454.62 & 2492.43 & 1322.39 & 2445.59 & 2479.24 & 1081.25 & 2457.28 & 2488.07 & 2141.85\tabularnewline

400 & 2444.51 & 2493.76 & 594.68 & 2458.94 & 2503.20 & 2621.34 & 2448.92 & 2488.42 & 2256.91 & 2460.47 & 2497.51 & 4386.91\tabularnewline

600 & 2445.66 & 2498.45 & 891.91 & 2460.54 & 2507.43 & 3915.88 & 2450.08 & 2492.49 & 3479.37 & 2461.57 & 2501.13 & 6687.30\tabularnewline

800 & 2446.22 & 2501.62 & 1192.79 & 2461.41 & 2509.69 & 5204.72 & 2450.72 & 2494.80 & 4693.84 & 2462.18 & 2502.41 & 8983.03\tabularnewline

1000 & 2446.55 & 2502.80 & 1492.82 & 2461.96 & 2511.04 & 6489.81 & 2451.14 & 2496.13 & 5945.78 & 2462.57 & 2504.03 & 11325.10\tabularnewline

1200 & 2446.78 & 2504.97 & 1789.72 & 2462.35 & 2512.74 & 7776.94 & 2451.44 & 2497.85 & 7164.30 & 2462.85 & 2505.36 & 13645.90\tabularnewline

1400 & 2446.96 & 2506.44 & 2083.91 & 2462.65 & 2513.97 & 9061.53 & 2451.67 & 2499.04 & 8417.98 & 2463.06 & 2506.34 & 15998.50\tabularnewline

1600 & 2447.12 & 2508.24 & 2385.15 & 2462.89 & 2515.40 & 10338.90 & 2451.85 & 2500.41 & 9703.68 & 2463.24 & 2507.37 & 18375.60\tabularnewline

1800 & 2447.26 & 2509.15 & 2680.13 & 2463.08 & 2516.28 & 11598.70 & 2452.00 & 2501.39 & 11042.30 & 2463.39 & 2508.25 & 20805.10\tabularnewline

2000 & 2447.39 & 2510.04 & 2980.25 & 2463.24 & 2517.07 & 12861.40 & 2452.12 & 2502.10 & 12368.30 & 2463.51 & 2508.90 & 23225.70\tabularnewline
\hline
\end{tabular}
}
}
\end{center}
\end{table}

\begin{figure}
\begin{center}
\includegraphics[width=0.75\columnwidth,page=13]{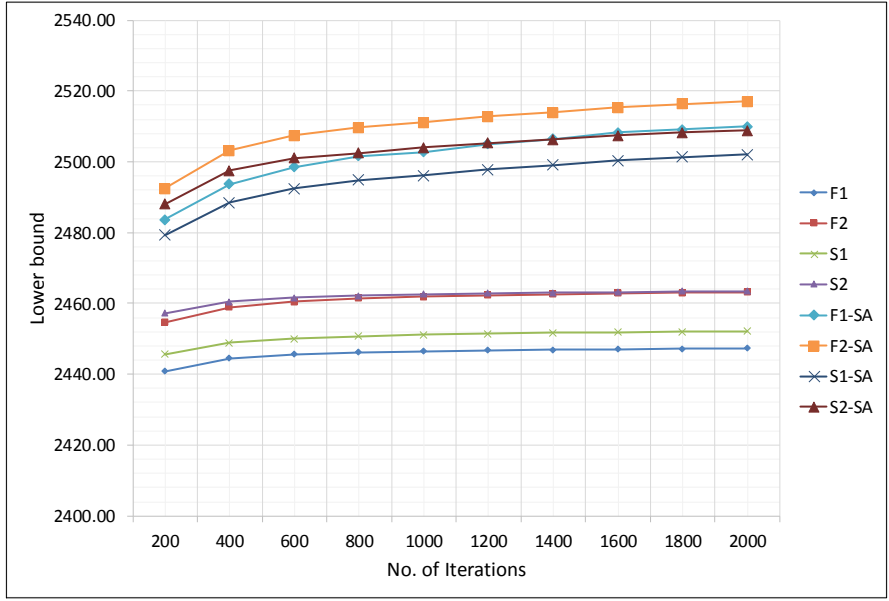}
\caption{Lower bounds of accelerated RLT2-DA variants.}
\label{fig:ch4:nug20lb}
\end{center}
\end{figure}

\begin{figure}
\begin{center}
\includegraphics[width=0.75\columnwidth,page=14]{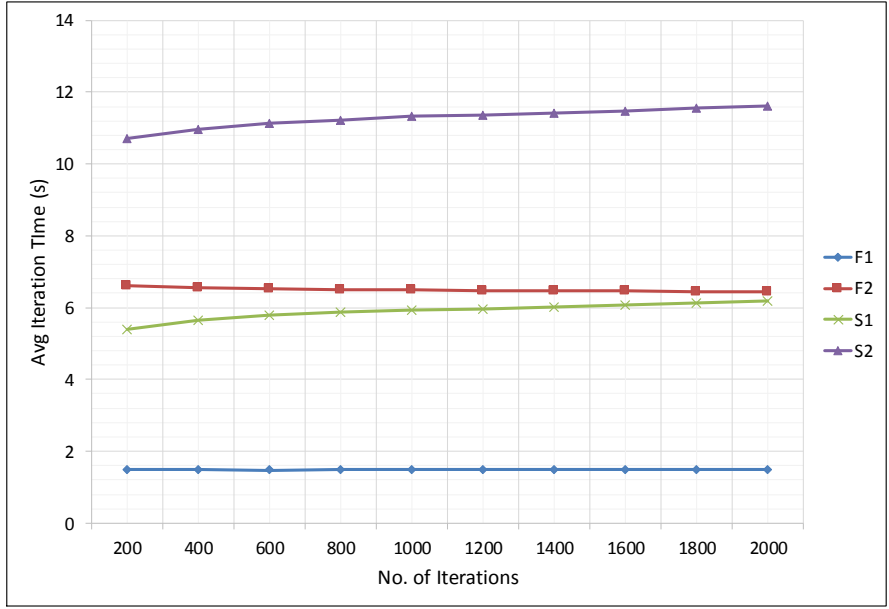}
\caption{Iteration times of accelerated RLT2-DA variants.}
\label{fig:ch4:nug20times}
\end{center}
\end{figure}

\begin{itemize}
\item In general, for non-SA variants, we can see that the lower bounds obtained using the ``slow'' variants are stronger than the ones obtained using the ``fast'' variants. The lower bounds obtained using the ``2-phase'' variants are stronger than the ones obtained using the ``1-phase'' variants.
\item  The lower bounds obtained using SA are significantly stronger than their non-SA counterparts. However, the lower bounds obtained using the ``fast'' variants with SA are much stronger than the ones obtained using the ``slow' variants with SA, contrary to Equation \eqref{eq:ch4:proof:4}. The reason behind this behavior is not completely understood but we suspect that the ``fast'' variants allow for a better redistribution of the LB among the cost coefficients of the symmetrical $z$ variables.
\item  The iteration times of ``fast'' variants are significantly shorter than the ``slow'' variants, and we can see that ``1-phase'' variants are at least twice as fast as ``2-phase'' variants. Additionally, the average iteration time for the  ``fast'' variants remains more or less constant. However, for the ``slow'' variants, the average iteration time increases with the number of iterations. The reason for this phenomenon is that as we update the dual multipliers, cost coefficients are spread further apart, thereby increasing the time spent in ``augmenting path search'' and ``dual update'' steps of the Hungarian algorithm \citep{date2016gpu}.
\item In general, S2-RLT2-DA is dominated both in terms of execution time and lower bound strength.
\item The primary bottleneck in the iteration time RLT2-DA is the Z-TLAP solution phase, however, we can increase the number of PEs (up to a certain limit) and solve more TLAPs in parallel to further reduce the iteration time. We present this scalability study in the next section.

\end{itemize}

\subsection{Multi-GPU Scalability Study}

Although, there is a minimum required number of PEs for applying accelerated RLT2-DA to a QAP of specific size, the number of PEs can be increased and the LAPs can be solved parallely on multiple PEs. This allows us to achieve some parallel speedup. We performed strong scalability study of our accelerated S1-RLT2-DA algorithm  on four of the Nugent problem instances. For this study, the PEs were increased from 1 to 32 in geometric increments of 2. For each PE category, the Y-TLAPs and Z-TLAPs were split evenly across all the GPUs and  solved parallely during each iteration. The results for the parallel scalability study are shown in Table \ref{tbl:ch4:scalability} and Fig.\ \ref{fig:ch4:scalability}. We can see that, as we continue to increase the number of PEs in the system, we get diminishing returns in the execution times. In other words, doubling the number of PEs does not necessarily reduce the execution time by half. This happens because increasing the number of PEs also increases the MPI communication. At some point, adding more PEs in the system will actually increase the execution time, because the communication will start to dominate.

\begin{table}
\begin{center}
\caption{Strong scalability results for S1-RLT2-DA (200 iterations).}
\label{tbl:ch4:scalability}
{\begin{tabular}{cccccccc}
\hline
Problem & LAP Counts & \multicolumn{6}{c}{Iteration Time (s)}\tabularnewline
\cline{3-8}
Instance & (X, Y, Z) & 1 PE & 2 PE & 4 PE & 8 PE & 16 PE & 32 PE\tabularnewline
\hline
Nug20 & (1, 400, 72200) & 5.41 & 3.36 & 2.68 & 2.40 & 1.36 & 1.88\tabularnewline

Nug22 & (1, 484, 106722) & 8.35 & 5.56 & 3.31 & 2.94 & 2.55 & 2.30\tabularnewline

Nug25 & (1, 625, 180000) & 15.94 & 10.11 & 7.26 & 5.08 & 3.68 & 3.49\tabularnewline

Nug27 & (1, 729, 246402) & 27.33 & 15.65 & 10.42 & 7.17 & 6.12 & 6.06\tabularnewline

\hline
\end{tabular}
}
\end{center} 
\end{table}

\begin{figure}
\begin{center} 
\includegraphics[width=0.7\columnwidth,page=12]{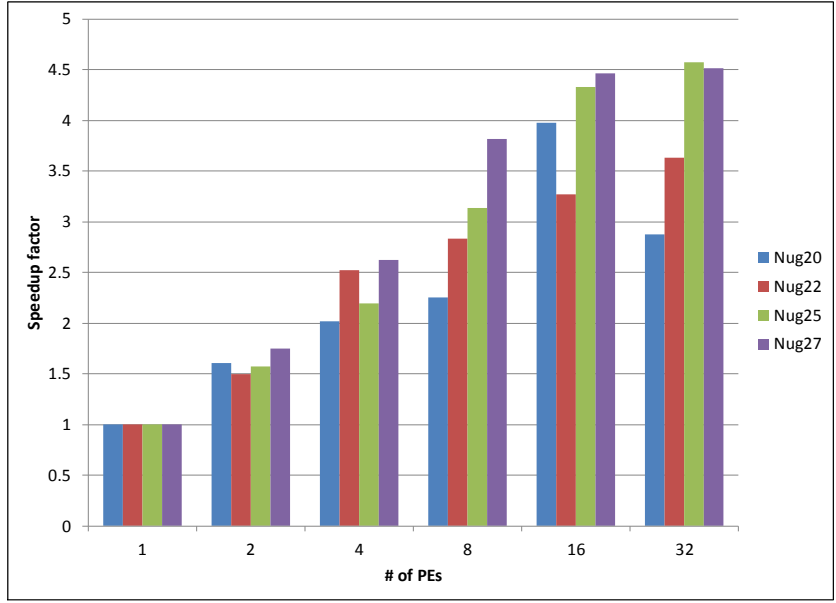}
\caption{Strong scalability tests for S2-RLT2-DA on Nug20.}
\label{fig:ch4:scalability}
\end{center} 
\end{figure}

\subsection{Lower Bounds on QAPLIB Instances}

We also tested the SA-based F2-RLT2-DA on some of the other well-known instances from the QAPLIB; and compared the lower bounds with three other lower bounding techniques, namely Triangle Decomposition Bound (TDB) of \citet{karisch1995}, Lift and Project relaxation bound (L\&P) of \citet{burer2006}, and  Semi-definite Programming (SDP) relaxation bound of \citet{peng2015}. The results are listed in Tables \ref{tbl:ch4:lbs} and \ref{tbl:sdp}. The percentage optimality gap is calculated as $100\times \frac{\text{UB}- \text{LB}}{\text{UB}}$. We have not reported the execution times of the competing methods, because of the differences in the hardware used by the authors.  

\begin{table}
\begin{center} 
\caption{F2-RLT2-DA lower bounds for various instances from QAPLIB (2000 iterations).}
\label{tbl:ch4:lbs}
{%
\begin{tabular}{ccccccc}
\hline
Problem & LAP Counts (X, Y, Z) & $K$ & UB & LB & \% GAP & Itn Time (s)\tabularnewline
\hline
Nug20 & (1, 400, 72200) & 1 & $2570^*$ & $2520$ & 1.95 & 6.52\tabularnewline
Nug22 & (1, 484, 106722) & 2 & $3596^*$ & $3557$ & 1.08 & 6.48\tabularnewline
Nug25 & (1, 625, 180000) & 4 & $3744^*$ & $3610$ & 3.58 & 7.99\tabularnewline
Nug27 & (1, 729, 246402) & 7 & $5234^*$ & $5076$ & 3.02 & 8.49\tabularnewline
Nug30 & (1, 900, 378450) & 12 & $6124^*$ & $5846$ & 4.54 & 9.08\tabularnewline
\hline
Tai20a & (1, 400, 72200) & 1 & $703,482^*$ & $675,257$ & 4.01 & 6.45\tabularnewline
Tai20b$^\ddagger$ & (1, 400, 72200) & 1 & $122,455,319^*$ & $122,454,000$ & 0.00 & 10.33\tabularnewline
Tai25a & (1, 625, 180000) & 4 & $1,167,256^*$ & $1,091,480$ & 6.49 & 7.75\tabularnewline
Tai25b$^\ddagger$ & (1, 625, 180000) & 4 & $344,355,646^*$ & $344,353,000$ & 0.00 & 9.89\tabularnewline
Tai30a & (1, 900, 378450) & 12 & $1,818,146$ & $1,687,500$ & 7.19 & 8.98\tabularnewline
Tai30b & (1, 900, 378450) & 12 & $637,117,113^*$ & $620,444,000$ & 2.62 & 10.73\tabularnewline
Tai35a & (1, 1225, 708050) & 31 & $2,422,002$ & $2,178,360$ & 10.06 & 14.57\tabularnewline
Tai35b & (1, 1225, 708050) & 31 & $283,315,445$ & $266,914,000$ & 5.79 & 17.44\tabularnewline
Tai40a & (1, 1600, 1216800) & 71 & $3,139,370$ & $2,777,580$ & 11.52 & 17.79\tabularnewline
Tai40b & (1, 1600, 1216800) & 71 & $637,250,948$ & $600,983,000$ & 5.69 & 19.63\tabularnewline
\hline
Tho30 & (1, 900, 378450) & 12 & $149,936^*$ & $140,827$ & 6.08 & 9.22\tabularnewline
Tho40 & (1, 1600, 1216800) & 71 & $240,516$ & $213,372$ & 11.29 & 18.41\tabularnewline
\hline
Sko42 & (1, 1764, 1482642) & 95 & $15,812$ & $14,741$ & 6.77 & 21.54\tabularnewline
\hline
\multicolumn{7}{l}{\footnotesize $^\ddagger$ Gap closure was achieved for these problems in 110 and 684 iterations respectively.}\tabularnewline
\end{tabular}
}
\end{center} 
\end{table}

\begin{table}
\begin{center} 
\caption{Comparison lower bounds from other methods with RLT2-DA.}
\label{tbl:sdp}
{%
\begin{tabular}{cccccc}
\hline
\multirow{2}{*}{Problem} & \multirow{2}{*}{UB} & \multicolumn{4}{c}{\% GAP}\tabularnewline
\cline{3-6}
 &  & RLT2-DA & TDB$^1$ & L\&P$^2$ & SDRMS-SUM$^3$\tabularnewline
\cline{1-6}
Nug20 & $2570^*$ & 1.95 & 6.85 & 2.49 & 9.03\tabularnewline
Nug22 & $3596^*$ & 1.08 & - & 2.34 & 8.68\tabularnewline
Nug25 & $3744^*$ & 3.58 & - & 3.29 & 9.05\tabularnewline
Nug27 & $5234^*$ & 3.02 & - & 2.33 & 7.91\tabularnewline
Nug30 & $6124^*$ & 4.54 & 5.75 & 3.10 & 8.43\tabularnewline
\hline
Tai20a & $703,482^*$ & 4.01 & - & 4.52 & -\tabularnewline
Tai20b & $122,455,319^*$ & 0.00 & - & 4.11 & 6.39\tabularnewline
Tai25a & $1,167,256^*$ & 6.49 & - & 4.66 & -\tabularnewline
Tai25b & $344,355,646^*$ & 0.00 & - & 11.70 & 15.15\tabularnewline
Tai30a & $1,818,146$ & 7.19 & - & 6.12 & -\tabularnewline
Tai30b & $637,117,113^*$ & 2.62 & - & 18.47 & 12.72\tabularnewline
Tai35a & $2,422,002$ & 10.06 & - & 8.48 & -\tabularnewline
Tai35b & $283,315,445$ & 5.79 & - & 15.42 & 13.78\tabularnewline
Tai40a & $3,139,370$ & 11.52 & - & 9.43$^\dagger$ & \tabularnewline
Tai40b & $637,250,948$ & 5.69 & - & - & 11.13\tabularnewline
\hline
Tho30 & $149,936^*$ & 6.08 & 9.00 & 4.75 & 12.24\tabularnewline
Tho40 & $240,516$ & 11.29 & 10.94 & 6.69$^\dagger$ & -\tabularnewline
\hline
Sko42 & $15,812$ & 6.77 & 5.56 & - & 7.58\tabularnewline
\hline
\multicolumn{6}{l}{\footnotesize $^1$\citet{karisch1995}. $^2$\citet{burer2006}. $^3$\citet{peng2015}.}\tabularnewline
\multicolumn{6}{l}{\footnotesize $^\dagger$These bounds were reported in QAPLIB and not in the referenced article.}\tabularnewline
\end{tabular}
}
\end{center} 
\end{table}

We can see that RLT2-DA dominates the TDB and SDRMS-SUM methods in almost all of the test instances, while it is dominated by the L\&P method on instances with $n > 22$. 
However, on TaiXXb instances, RLT2-DA performs extremely well, while L\&P has unexpectedly poor performance. We suspect this happens because TaiXXb instances have asymmetric distances, that are weakly planar and widely separated, which somehow favors the RLT2 polytope and the corresponding lower bound. The only downside of using RLT2-DA is that it is time-intensive and requires large memory for modest-sized instances  (see Table \ref{tbl:ch4:mingpu}). Other methods can provide strong lower bounds, possibly faster than RLT2-DA, and they do not require large amounts of memory, even for sufficiently large QAP instances. A hybrid method could be proposed as future work.

\subsection{Computational Results for Parallel Branch-and-bound}

Finally, we tested our parallel B\&B scheme with accelerated RLT2-DA. The master list was seeded with nodes from $\ell_{\text{init}} \in \{2, 3, 4\}$, i.e., nodes obtained by fixing the locations of the specified number of facilities, using Branching Rule 3. For each of the nodes allocated to a particular PE bank, a lower bound is calculated using at most 500 iterations of our accelerated F1-RLT2-DA with SA.  The RLT2-DA procedure was stopped, in favor of branching, if the optimality gap did not improve by $0.0002$ within the last $25$ iterations. The maximum branching depth $d$ was set to 5. For each instance, the best-known upper bound was used as the initial bound estimate in the B\&B scheme.
For each problem, we noted the total execution time, the number of nodes explored, and the utilization factor of each PE bank. The utilization factor is equal to the ratio of clock time for which a particular PE bank was busy with productive work, such as processing a node, to the total clock time for which the resources were requested. Low utilization indicates that the PE bank spent most of its time in idle state. 

\begin{table}[!h]
\begin{center}
\caption{Branch-and-bound results on QAPLIB instances.}
\label{tbl:ch4:bnb}
{%
\begin{tabular}{ccccccccccc}
\hline
\multirow{2}{*}{Problem} & \multirow{2}{*}{UB}  & \multirow{2}{*}{$N$} & \multirow{2}{*}{$K$} & \multirow{2}{*}{$\ell_{\text{init}}$} & \multicolumn{2}{c}{Nodes} & \multicolumn{3}{c}{PE Utilization} & Time\tabularnewline
\cline{6-10}
 &  &  &  &  & Initial  & Explored   & {Min} & {Avg} & {Max} & (d:hh:mm:ss) \tabularnewline
\cline{1-11}

Nug20$^\dagger$ & $2570^*$ & 4 & 1 & 2 & 98 & 134 & 0.77 & 0.89 & 0.99 & 0:00:38:41\tabularnewline
Nug22 & $3596^*$ & 10 & 1 & 2 & 462 & 622 & 0.82 & 0.91 & 0.99 & 0:01:34:03\tabularnewline
Nug25$^\dagger$ & $3744^*$ & 50 & 2 & 3 & 1,755 & 3,868 & 0.81 & 0.90 & 0.97 & 0:02:44:24\tabularnewline
Nug27 & $5234^*$ & 100 & 2 & 3 & 17,550 & 55,761 & 0.97 & 0.98 & 0.99 & 1:02:28:32\tabularnewline
Nug30$^\dagger$ & $6124^*$ & 300 & 4 & 4 & 164,520 & 840,273 & 0.96 & 0.96 & 0.97 & 4:14:06:21\tabularnewline
\hline
Tai20a & $703,482^*$ & 10 & 1 & 2 & 380 & 3,512 & 0.88 & 0.92 & 0.98 & 0:03:56:52\tabularnewline
Tai20b$^\ddagger$ & $122,455,319^*$ & 1 & 1 & 0 & 1 & 1 & 1.00 & 1.00 & 1.00 & 0:00:18:57\tabularnewline
Tai25a & $1,167,256^*$ & 100 & 2 & 3 & 13,800 & 523,005 & 0.97 & 0.98 & 0.98 & 3:13:53:33\tabularnewline
Tai25b$^\ddagger$ & $344,355,646^*$ & 1 & 4 & 0 & 1 & 1 & 1.00 & 1.00 & 1.00 & 0:01:52:43\tabularnewline
Tai30b & $637,117,113^*$ & 60 & 4 & 2 & 870 & 30,523 & 0.91 & 0.93 & 0.95 & 2:09:55:17\tabularnewline

\hline

\multicolumn{11}{l}{\footnotesize $^\dagger$ Grid symmetry elimination rules were used for these problem instances.}\tabularnewline
\multicolumn{11}{l}{\footnotesize $^\ddagger$ Gap closure was achieved for these problems in 110 and 684 iterations respectively.}
\end{tabular}
}
\end{center} 
\end{table}

The results for the B\&B tests are shown in Table \ref{tbl:ch4:bnb}. We can see that the number of nodes explored and the completion times increase exponentially with the problem size. The PE bank utilization also increases, because there is more work available for each processor. The number of nodes explored in each problem are comparable to those reported by \citet{adams2007}. The most notorious Nug30 problem instance required over 4 days to solve optimally, using 300 PE banks with 4 PEs each. The number of nodes explored were 840K. To put this in perspective, the  solution procedure proposed by \citet{anstreicher2002} used an average of 650 worker machines (peak 1000) for over a one-week period. The number of nodes explored were of the order of 11G. \citet{anstreicher2002} also reported that, due to the unstructured flow and distance matrices of the TaiXXa problems, solving those problems using their MWQAP procedure is not practical, even with large computational grids.  
Although, we noted the weakening of the RLT2 lower bounds for TaiXXa instances, our parallel algorithm designed for GPU cluster, was able to solve the Tai20a instance using 10 PEs in 4 hours; the Tai25a instance using 200 PEs in 3.58 days; and the Tai30b instance using 240 PEs in 2.42 days.  Since our parallelization scheme is scalable across multiple GPUs, these solution times could have been further improved 
by simply requesting more GPUs. Our tests were limited by the number of hours allocated to our project on the Blue Waters system.



\section{Conclusions}\label{sec:qap:concl}

To summarize, we developed a Graphics Processing Units (GPU)-accelerated version of the Lagrangian dual ascent procedure (RLT2-DA), for obtaining lower bounds on the Level-2 Refactorization Linearization Technique (RLT2)-based formulation of the Quadratic Assignment Problem (QAP), using multiple GPUs in a grid setting. The sequential procedure has two main stages: Linear Assignment Problem (LAP) solution stage and multiplier update stage. In the LAP solution stage, we have to solve $O(n^6)$ LAPs of size $n \times n$, which can be solved independently of each other. We can use our GPU-accelerated Hungarian algorithm to solve a group of LAPs on each GPU, which provides additional speed up. For the multiplier update stage, we leveraged on the fact that each multiplier can be updated independently of the others, and this can be done parallely by a host of CUDA threads. Our main contribution is a novel GPU-based parallelization of the RLT2-DA, in which we used redundant matrices for the symmetrical cost coefficients. This approach allows us to avoid communicating $O(n^6)$ cost coefficients through MPI, and  achieve superior parallel scalability. 

We conducted several tests on different variants of our accelerated RLT2-DA procedure, and compared them based on their lower bound strengths and execution times. 
We concluded that simulated annealing based approaches provide significantly stronger bounds as compared to non-simulated annealing based approaches. Although it is counter-intuitive, simulated annealing based fast 2-phase (F2) variant provides the strongest lower bound of all the variants. Therefore, it is best suited for settings where the primary goal is to find strong lower bounds on the QAPs.  The fast 1-phase (F1) and slow 1-phase (S1) variants  play an important role in branch-and-bound, where we need to consider the trade-off between the bound strength and iteration time. 
With our architecture, we are able to obtain strong lower bounds on problem instances with up to 42 facilities and optimally solve problems with up to 30 facilities, using only a modest number of GPUs.  

The most lucrative feature of our GPU-accelerated algorithm is that it can be implemented on a desktop computer simply equipped with a few gaming graphics cards, to obtain strong lower bounds on comparatively large QAPs. With some additional work in memory management and CPU + GPU collaboration, our proposed algorithms can be used effectively to solve truly large QAPs with $n\geq 30$, by harnessing the full potential of supercomputing systems like the Blue Waters. Other future directions of research include adaptation of RLT2-DA for solving related problems such as the facility location, graph association, traveling salesman problem, vehicle routing problem, etc. 


\section*{Acknowledgments}

Development and testing of this work was done using the resources from the Blue Waters sustained-petascale computing project, which is supported by the National Science Foundation (awards OCI-0725070 and ACI-1238993) and the state of Illinois. Blue Waters is a joint effort of the University of Illinois at Urbana-Champaign and its National Center for Supercomputing Applications. We gratefully appreciate this support.

\bibliography{bib_qap,bib_lap}

\begin{thebibliography}{}

\bibitem[Adams et~al., 2007]{adams2007}
Adams, W.~P., Guignard, M., Hahn, P.~M., and Hightower, W.~L. (2007).
\newblock A level-2 reformulation--linearization technique bound for the
  {Q}uadratic {A}ssignment {P}roblem.
\newblock {\em European Journal of Operational Research}, 180(3):983--996.

\bibitem[Adams and Johnson, 1994]{adams1994}
Adams, W.~P. and Johnson, T.~A. (1994).
\newblock Improved linear programming-based lower bounds for the {Q}uadratic
  {A}ssignment {P}roblem.
\newblock {\em DIMACS Series in Discrete Mathematics and Theoretical Computer
  Science}, 16:43--77.

\bibitem[Anstreicher et~al., 2002]{anstreicher2002}
Anstreicher, K., Brixius, N., Goux, J.-P., and Linderoth, J. (2002).
\newblock Solving large {Q}uadratic {A}ssignment {P}roblems on computational
  grids.
\newblock {\em Mathematical Programming}, 91(3):563--588.

\bibitem[Burer and Vandenbussche, 2006]{burer2006}
Burer, S. and Vandenbussche, D. (2006).
\newblock Solving lift-and-project relaxations of binary integer programs.
\newblock {\em SIAM Journal on Optimization}, 16(3):726--750.

\bibitem[Burkard, 2002]{burkard2002}
Burkard, R.~E. (2002).
\newblock {Q}uadratic {A}ssignment {P}roblems.
\newblock {\em Handbook of combinatorial optimization}, pages 2741--2814.

\bibitem[Burkard et~al., 1997]{qaplib1997}
Burkard, R.~E., Karisch, S.~E., and Rendl, F. (1997).
\newblock {QAPLIB}--{A} {Q}uadratic {A}ssignment {P}roblem library.
\newblock {\em Journal of Global Optimization}, 10(4):391--403.

\bibitem[Clausen and Perregaard, 1997]{clausen1997}
Clausen, J. and Perregaard, M. (1997).
\newblock Solving large {Q}uadratic {A}ssignment {P}roblems in parallel.
\newblock {\em Computational Optimization and Applications}, 8(2):111--127.

\bibitem[Date and Nagi, 2016]{date2016gpu}
Date, K. and Nagi, R. (2016).
\newblock {GPU}-accelerated {H}ungarian algorithms for the {L}inear
  {A}ssignment {P}roblem.
\newblock {\em Parallel Computing}, 57:52--72.

\bibitem[Frieze and Yadegar, 1983]{frieze1983}
Frieze, A. and Yadegar, J. (1983).
\newblock On the {Q}uadratic {A}ssignment {P}roblem.
\newblock {\em Discrete Applied Mathematics}, 5(1):89--98.

\bibitem[Geoffrion, 1974]{geoffrion1974}
Geoffrion, A.~M. (1974).
\newblock Lagrangean relaxation for integer programming.
\newblock In {\em Approaches to Integer Programming}, pages 82--114. Springer.

\bibitem[Gonçalves et~al., 2017]{goncalves2017}
Gonçalves, A.~D., Pessoa, A.~A., Bentes, C., Farias, R., and de~A.~Drummond,
  L.~M. (2017).
\newblock A {G}raphics {P}rocessing {U}nit algorithm to solve the {Q}uadratic
  {A}ssignment {P}roblem using {L}evel-2 {R}eformulation-{L}inearization
  {T}echnique.
\newblock {\em INFORMS Journal on Computing}, 29(4):676--687.

\bibitem[Hahn and Grant, 1998]{hahn1998}
Hahn, P. and Grant, T. (1998).
\newblock Lower bounds for the {Q}uadratic {A}ssignment {P}roblem based upon a
  dual formulation.
\newblock {\em {O}perations {R}esearch}, 46(6):912--922.

\bibitem[Hahn et~al., 2012]{hahn2012}
Hahn, P.~M., Zhu, Y.-R., Guignard, M., Hightower, W.~L., and Saltzman, M.~J.
  (2012).
\newblock A level-3 reformulation-linearization technique-based bound for the
  {Q}uadratic {A}ssignment {P}roblem.
\newblock {\em INFORMS Journal on Computing}, 24(2):202--209.

\bibitem[Karisch and Rendl, 1995]{karisch1995}
Karisch, S.~E. and Rendl, F. (1995).
\newblock Lower bounds for the {Q}uadratic {A}ssignment {P}roblem via triangle
  decompositions.
\newblock {\em Mathematical Programming}, 71(2):137--151.

\bibitem[Kaufman and Broeckx, 1978]{kaufman1978}
Kaufman, L. and Broeckx, F. (1978).
\newblock An algorithm for the {Q}uadratic {A}ssignment {P}roblem using
  {B}ender's decomposition.
\newblock {\em European Journal of Operational Research}, 2(3):207--211.

\bibitem[Koopmans and Beckmann, 1957]{koopmans1957}
Koopmans, T.~C. and Beckmann, M. (1957).
\newblock Assignment problems and the location of economic activities.
\newblock {\em Econometrica: Journal of the Econometric Society}, pages 53--76.

\bibitem[Lawler, 1963]{lawler1963}
Lawler, E.~L. (1963).
\newblock The {Q}uadratic {A}ssignment {P}roblem.
\newblock {\em Management Science}, 9(4):586--599.

\bibitem[Loiola et~al., 2007]{loiola2007}
Loiola, E.~M., De~Abreu, N. M.~M., Boaventura-Netto, P.~O., Hahn, P., and
  Querido, T. (2007).
\newblock A survey for the {Q}uadratic {A}ssignment {P}roblem.
\newblock {\em European Journal of Operational Research}, 176(2):657--690.

\bibitem[Pardalos and Crouse, 1989]{pardalos1989}
Pardalos, P.~M. and Crouse, J.~V. (1989).
\newblock A parallel algorithm for the {Q}uadratic {A}ssignment {P}roblem.
\newblock In {\em Proceedings of the 1989 ACM/IEEE conference on
  Supercomputing}, pages 351--360. ACM.

\bibitem[Peng et~al., 2015]{peng2015}
Peng, J., Zhu, T., Luo, H., and Toh, K.-C. (2015).
\newblock Semi-definite programming relaxation of {Q}uadratic {A}ssignment
  {P}roblems based on nonredundant matrix splitting.
\newblock {\em Computational Optimization and Applications}, 60(1):171--198.

\bibitem[Ramachandran and Pekny, 1996]{ramachandran1996}
Ramachandran, B. and Pekny, J. (1996).
\newblock Dynamic matrix factorization methods for using formulations derived
  from higher order lifting techniques in the solution of the {Q}uadratic
  {A}ssignment {P}roblem.
\newblock {\em Nonconvex Optimization and its Applications}, 7:75--92.

\bibitem[Ramakrishnan et~al., 2002]{ramakrishnan2002}
Ramakrishnan, K., Resende, M., Ramachandran, B., and Pekny, J. (2002).
\newblock Tight {QAP} bounds via linear programming.
\newblock {\em Combinatorial and Global Optimization}, pages 297--303.

\bibitem[Roucairol, 1987]{roucairol1987}
Roucairol, C. (1987).
\newblock A parallel branch and bound algorithm for the {Q}uadratic
  {A}ssignment {P}roblem.
\newblock {\em Discrete Applied Mathematics}, 18(2):211 -- 225.

\bibitem[Sahni and Gonzalez, 1976]{sahni1976}
Sahni, S. and Gonzalez, T. (1976).
\newblock P-complete approximation problems.
\newblock {\em Journal of the ACM (JACM)}, 23(3):555--565.

\end{thebibliography}
\bibliographystyle{apalike} 

\end{document}